\documentclass[12pt, a4paper]{article}
\usepackage[utf8]{inputenc}
\usepackage[margin=0.75in]{geometry}
\usepackage{booktabs, float}
\usepackage{tabularx}
\usepackage{hyperref}
\hypersetup{
	colorlinks=true,
	linkcolor=blue,
	citecolor=blue,
    urlcolor=blue
}
\usepackage{amsmath, amsfonts, amssymb, amsthm}
\usepackage{enumerate}

\newtheorem{theorem}{Theorem}
\newtheorem{lemma}{Lemma}
\usepackage{algorithm}
\usepackage{algpseudocode}
\usepackage{appendix}
\usepackage{amssymb,amsmath,relsize}
\usepackage{natbib}
\usepackage{graphicx} 
\usepackage{subfigure}
\usepackage{rotating}
\usepackage{adjustbox}
\usepackage{csquotes}
\usepackage{mathrsfs}
\usepackage{isomath}
\usepackage{lscape}
\usepackage{multirow}
\usepackage{tikz}
\usetikzlibrary{arrows.meta, positioning}
\DeclareUnicodeCharacter{2212}{-}
\begin{document}
	
	\title{Optimal Designs in Multicomponent Stress–Strength Reliability for the Unit Generalized Rayleigh Distribution}
	\author{Rajat Das$^1$, Yogesh Mani Tripathi$^{1}$, Tanmay Kayal$^2$\thanks{Corresponding author:tanmay@iitmandi.ac.in}
	}
	\date{}
	{\footnotesize  \maketitle \noindent {\it $^{1}$Department of Mathematics, Indian Institute of Technology Patna, Patna, Bihar-801106, India}\\
    {\it $^2$School of Mathematical \& Statistical Sciences, Indian Institute of Technology Mandi, Kamand, 175075, Himachal Pradesh, India }} 
	\maketitle
	\begin{abstract}
		\noindent  
        A unified inferential framework is developed to address the stress-strength reliability of multicomponent systems under progressive Type II censoring. The maximum likelihood estimate of reliability is obtained using an expectation-maximization algorithm, followed by the determination of the corresponding Fisher information matrix and confidence intervals based on the missing-value principle. To facilitate a comparative inferential assessment, maximum product spacing estimates are also developed. By employing both informative and non-informative prior models, a comprehensive analysis is conducted within a Bayesian framework, and suitable summaries are obtained using the Markov chain Monte Carlo algorithm. The performance of all the estimators is analyzed through an extensive simulation study. Finally, a practical application of the proposed methodology is presented using a reliability data set. Furthermore, we determine optimal progressive censoring strategies using three different optimality measures and discuss their usefulness in reliability studies.

		~~\\ 
		\noindent {\it Keywords:}~ Unit Generalized Rayleigh Distribution, Multicomponent reliability, EM algorithm, Maximum product of spacing, Bayesian inference, Optimality.\\

        \noindent {\it AMS Classifications:}~ 62F10, 62F15, 62N02.
	\end{abstract}
	
\section{Introduction}
The stress-strength reliability model, first introduced by \citet{Church}, is one of the preeminent concepts in reliability engineering and survival analysis. It provides a probabilistic measure of component performance by quantifying the likelihood that the inherent strength of a unit exceeds the external stress applied to it during operation. If $X$ represents the random strength and $Y$ denotes the random stress, the reliability is defined as $\Psi = P(X > Y)$, which indicates the probability that the component functions safely without failure. This framework is widely used across mechanical engineering, materials science, aerospace systems, electronic devices, and quality control, where stress values may arise from loads, temperatures, voltages, or pressures, and strength represents the component's resistance or endurance. Because failure occurs whenever the applied stress surpasses the strength, the stress–strength model offers a direct and intuitive tool for assessing durability and safety margins. Extensive research has been conducted on estimating $\Psi$ under various lifetime distributions and data structures, including \citet{Weerahandi}, \citet{kundu}, \citet{ghitany}, \citet{cetinkaya}, \citet{Wang_csda}, \citet{sarhan}, \citet{sharma}, \citet{asadi}, \citet{surles}, \citet{Krishnamoorthy}. Real-life testing often involves censored or incomplete observations, motivating the study of stress–strength reliability under Type I, Type II, and progressive censoring schemes. These developments enable more realistic modeling of experimental constraints and practical testing situations. As a result, the stress–strength model has evolved into a versatile analytical framework, forming the basis for reliability evaluation in both single-component and multicomponent systems, and continues to play a vital role in modern reliability analysis.

The $s$-out-of-$v$ system is a widely used reliability structure that models partially redundant engineering systems. In this configuration, a system composed of $v$ identical components continues to operate successfully as long as at least $s$ of those components remain functional. Thus, system failure occurs only when $v-s+1$ components have failed. This framework generalizes classical reliability systems: when $s = 1$, it reduces to a parallel system, and when $s = v$, it becomes a series system. The flexibility of this structure allows it to accurately model diverse real-world applications such as communication networks, power grids, manufacturing processes, aircraft control systems, and mechanical assemblies, where partial capacity is sufficient for proper operation. Mathematically, if $X_1, X_2, \ldots, X_v$ denote the lifetimes of the components, the system lifetime is given by the $(v-s+1)$-th order statistic $T_{s:v} = X_{v-s+1:v}$.
This representation enables rigorous probabilistic analysis of system survival, failure rate, and other performance metrics under various lifetime distributions. The $s$-out-of-$v$ structure has therefore become a fundamental model in reliability engineering due to its ability to incorporate redundancy while maintaining analytical tractability. Its importance continues to grow, especially in modern industrial systems where designing for partial functionality is essential for ensuring operational stability and fault tolerance. Two important variants of the $s$-out-of-$v$ structure commonly discussed in the literature are the $s$-out-of-$v$: M system and the $s$-out-of-$v$: G system. For illustration, a communication setup involving six transmitters may be examined.  If reliable message delivery requires that at least two transmitters remain active at any time, the configuration is represented as a $2$-out-of-$6{:}G$ system. More broadly, a system composed of $v$ independent and identically distributed strength components operating under a shared random stress is considered. The system is regarded as functional only when no fewer than $s$ of the $v$ component strengths exceed the imposed stress, where $1 \leq s \leq v$, and is thus termed an $s$-out-of-$v{:}G$ system. Let $X_1, \ldots, X_v$ denote the component strength variables governed by $F(\cdot)$, and let $Y$ represent the common stress variable with distribution function $G(\cdot)$. The resulting system reliability is expressed as
\begin{equation*}
    \begin{aligned}
    \Psi_{s,v} = & P(\text{no fewer than $s$ of } (X_1, X_2, \ldots, X_v)
 \text{ exceed } Y) \\
= & \sum_{i=s}^{v} \binom{v}{i} \int_{-\infty}^{\infty} [1 - F(t)]^{\,i} [F(t)]^{\,v-i} \, dG(t).
\end{aligned}
\end{equation*}
The concept of reliability in a multicomponent stress–strength model is first introduced by \citet{Bhattacharyya}, who laid the foundational framework for analyzing systems in which several components are subjected to a random stress. The multicomponent stress–strength reliability (MSSR) problem has attracted considerable attention, with numerous authors proposing classical estimation procedures, distribution-specific formulations, and theoretical advancements to characterize system reliability under various modeling scenarios. Over the years, researchers have investigated MSSR under a wide range of lifetime distributions, different censoring schemes (CSs), and inferential settings, demonstrating the model's versatility and practical relevance in engineering reliability analysis. Significant developments and related studies on MSSR can be found in the works of \citet{Kayal}, \citet{Singh_dp}, \citet{Kotb}, \citet{Nadar}, \citet{Singh_k}, \citet{Jana}, and \citet{Jha}, among others. These contributions encompass both classical and Bayesian frameworks, propose analytical expressions for reliability, explore PC and hybrid CSs, and introduce flexible lifetime distributions suited for real-world applications. Collectively, these studies have enriched the understanding of multicomponent stress–strength reliability and continue to motivate further research in system modeling, parametric inference, and reliability assessment.

Subsequently, censored data play a key role in reliability analysis when complete failure times are unavailable due to time, cost, or other measurement constraints. In this context, Type~I and Type~II censoring are widely used, where testing terminates at a fixed time point or after a fixed number of failures has been recorded, respectively. \citet{Oommen24} discusses various applications of these schemes. An important extension is the progressive Type~II (PT2)  censoring scheme (CS) initially proposed by \citet{herd}, which allows the pre-planned random withdrawal of surviving units during the experiment. Let a life test be conducted on $N$ systems, each comprising $K$ components. At the time the $i$th system fails, $S_i$ surviving systems are randomly withdrawn from the experiment. Similarly, at the time the $j$th component fails, $R_j$ surviving components are randomly removed from the experiment. This procedure continues sequentially until $n$th system failure recorded along with failure of $v$th component. Thus testing terminates at the observation of $n$th system failure and $v$th component failure. Accordingly the remaining
\[
S_n = N - n - \sum_{i=1}^{n-1} S_i \quad \text{and} \quad
R_v = V - v - \sum_{j=1}^{v-1} R_j
\]
systems and components are withdrawn from the experiment, and testing stops. Progressive censoring enhances experimental flexibility and efficiency while preserving statistical accuracy, as discussed in \citet{Balakrishnan book} and \citet{Balakrishnan censoring}. \citet{kohan19} discusses inference for MSSR under progressive censoring by considering the Kumaraswamy distribution.

Lifetime models with bounded support are important tools in statistical data analysis, especially when the variable of interest is restricted to the bounded interval $(0,1)$. Distributions defined on this interval are commonly known as unit lifetime distributions and are widely used to model proportions, rates, and reliability-related quantities. Alongside established models such as the Beta and Kumaraswamy distributions, several unit distributions have been developed by applying suitable transformations to existing lifetime models, thereby enhancing modeling flexibility. A new model, the unit-generalized Rayleigh (UGR) distribution, was recently introduced by \citet{Jha_unit}. The UGR distribution is obtained by applying a variable transformation to the generalized Rayleigh distribution. The new model provides additional flexibility for modeling bounded-lifetime data. We also refer to \citet{Kundu_csda} for a discussion. Its tractable form and adaptable shape make it a useful alternative for analyzing data supported on a bounded interval with potential applications in reliability analysis and related fields. Let $W$ be a random variable that follows a generalized Rayleigh distribution, then $T=e^{-W}$ follows UGR distribution with distribution function,
\begin{equation}\label{cdf}
    F_T(t) = 1- \big(1-e^{-(\varpi\ln t)^2}\big)^{\eta}; \quad0<t<1,\eta>0,\varpi>0,
\end{equation}
with $\eta$ and $\varpi$ are parameters. The corresponding density, survival and hazard rate functions are given by,
\begin{equation}
    f_T(t)=2\eta\varpi^2\left(-\frac{\ln t}{t}\right)e^{-(\varpi\ln t)^2}\left(1-e^{-(\varpi\ln t)^2}\right)^{\eta-1}; \quad0<t<1,\eta>0,\varpi>0,
\end{equation}
\begin{equation}
    S_T(t)= \big(1-e^{-(\varpi\ln t)^2}\big)^{\eta}; \quad0<t<1,\eta>0,\varpi>0,
\end{equation}
and 
\begin{equation}
    h_T(t)= \frac{2\eta\varpi^2\left(-\frac{\ln t}{t}\right)e^{-(\varpi\ln t)^2}}{\left(1-e^{-(\varpi\ln t)^2}\right)}; \quad0<t<1,\eta>0,\varpi>0.
\end{equation}
A notable feature of the UGR distribution is that different parameter assignment enables it to assume various shapes. We plot the density and distribution functions for various parameter values. We find useful patterns as depicted in Figure \ref{pdf_hrf}. It indicates increasing shapes, positive skew, negative skew, unimodal curves, and even bathtub-type behaviors. Due to this flexibility, the UGR model can fit many types of bounded data. Overall, it provides a useful and adaptable option for analyzing lifetime data on the unit interval, thus making it a valuable addition to reliability and statistical modeling.

\begin{figure}
    \centering
    \includegraphics[width=1\linewidth]{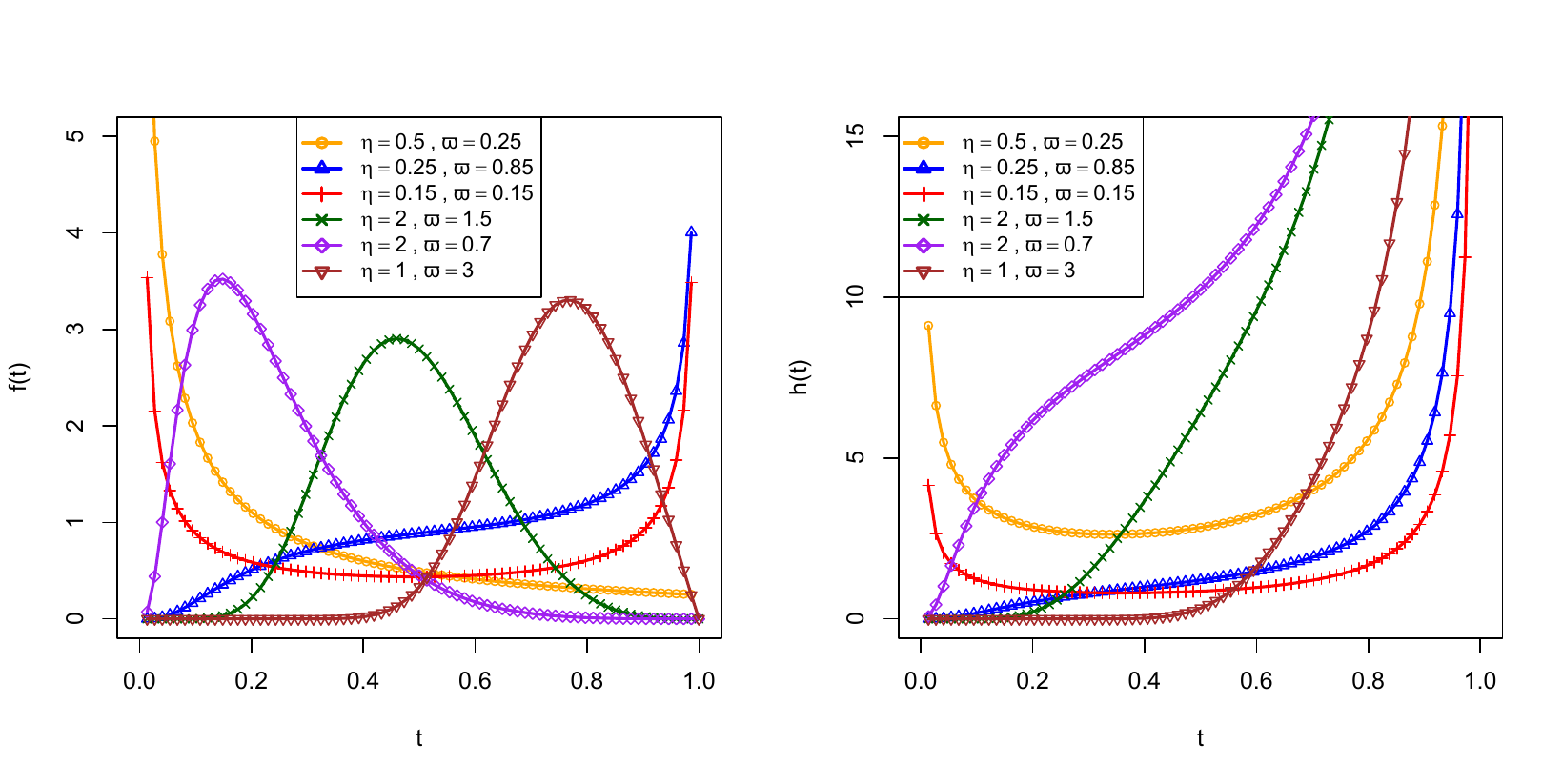}
    \caption{PDF and hazard plot for different values of $\eta$ and $\varpi$.}
    \label{pdf_hrf}
\end{figure}

We develop a unified statistical model for analyzing the stress-strength reliability in a multicomponent system under a PT2 censoring framework. We first derive the maximum likelihood estimates (MLEs) via the expectation-maximization (EM) algorithm in subsection \ref{mle-em}. Using the missing value principle, we derive Fisher information and asymptotic confidence interval (ACI) in Section \ref{FIM}. We study the maximum product of the spacing function as an alternative estimate to MLE in subsection \ref{mle-mps}. In Section \ref{bayesian}, Bayesian inference is discussed under informative and non-informative priors. In Section \ref{simulation study}, a comprehensive simulation study is presented to analyze the behavior of the proposed estimates. Real-life reliability data is discussed in Section \ref{application}. We propose optimal censoring plans based on three different comparative criteria in Section \ref{optimal cs}. Finally, we provide some concluding remarks and discuss the future direction of the recent study in Section \ref{conclusion}.











\section{Estimation of \texorpdfstring{$\Psi_{s,v}$}{Rs,v}}
\subsection{The EM algorithm}
The EM algorithm is a powerful procedure for computing MLEs of unknown parameters under censored samples, see \citet{Dempster}. It is widely used in likelihood estimation involving missing or incomplete data. This algorithm iteratively alternates between expectation (E) and maximization (M) steps. The first expected value of the complete-data log-likelihood is computed under the given current parameter estimates. Then maximization step updates current parametric estimates by maximizing the expected log-likelihood. For competing risks models with unknown causes of failure, the EM algorithm treats the unobserved causes as latent variables. In the E-step, the probability of each possible failure cause is computed under observations with unknown causes, based on current parameter estimates. The M-step updates estimates by maximizing the expected likelihood, effectively assigning the unknown causes in a statistically principled manner. This process continues until convergence, typically yielding maximum likelihood estimates of the model parameters. Several studies including \citet{Ng}, \citet{Swaroop}, \citet{Rastogi}, \citet{Shi}, and \citet{El‐Saeed} discuss several applications of this procedure.

Let $\boldsymbol{\Theta}$ be a vector of unknown parameters. Suppose the complete data is denoted by $\mathbf{x} = (x_1, x_2, \dots, x_n)$ with joint density $f(x_i; \boldsymbol{\Theta})$. The complete-data likelihood function is given by:
\[
L_c(\boldsymbol{\Theta} \mid \mathbf{x}) = \prod_{i=1}^{n} f(x_i; \boldsymbol{\Theta}).
\]

In practical scenarios, we often observe only part of the data. Let the observed portion be denoted by $\mathbf{y} = (y_1, \dots, y_m)$ and the missing portion by $\mathbf{z} = (z_{m+1}, \dots, z_n)$, such that $\mathbf{x} = (\mathbf{y}, \mathbf{z})$.

Let $\boldsymbol{\Theta}_m$ be the estimate of $\boldsymbol{\Theta}$ at the $m$-th iteration. The EM algorithm iteratively applies the following two steps:

\textbf{E-step:} Compute the expected value of the complete-data log-likelihood concerning the conditional distribution of the missing data given the observed data and the current parameter estimates:
\begin{align*}
    Q(\boldsymbol{\Theta} \mid \boldsymbol{\Theta}_m) 
= & E_{\mathbf{z} \mid \mathbf{y}, \boldsymbol{\Theta}_m} \left[ \log L_c(\boldsymbol{\Theta} \mid \mathbf{y}, \mathbf{z}) \right]\\
= & \int \log L_c(\boldsymbol{\Theta} \mid \mathbf{y}, \mathbf{z}) \, p(\mathbf{z} \mid \mathbf{y}, \boldsymbol{\Theta}_m) \, d\mathbf{z}.
\end{align*}

\textbf{M-step:} Maximize the expected log-likelihood computed in the E-step concerning the parameters $\boldsymbol{\Theta}$:
\[
\boldsymbol{\Theta}_{m+1} = \arg \max_{\boldsymbol{\Theta}} Q(\boldsymbol{\Theta} \mid \boldsymbol{\Theta}_m).
\]
This algorithm guarantees a non-decreasing sequence of likelihood values and typically converges to a local maximum of the likelihood function under the observed data. Its modular structure and numerical stability make it particularly well-suited for models involving latent or incomplete data like censored or masked competing risks scenarios.
\subsection{Maximum likelihood estimation (MLE)}
Here, we derive the MLE of the multicomponent stress-strength reliability $R_{s,v}$ when the latent failure times are observed under PT2 censoring. Let $X$ and $Y$ be strength and stress variables following $UGR(\eta_1,\varpi)$ and $UGR(\eta_2,\varpi)$ distributions, respectively, with unknown shape parameters $\eta_1,\eta_2$ and a common parameter $\varpi$. The reliability in a multicomponent stress-strength model under a unit Generalized Rayleigh distribution is given by
\begin{equation}
    \Psi_{s,v} = \frac{\eta_2}{\eta_1}\sum_{i=s}^v\binom{v}{i}B\left(\frac{\eta_2}{\eta_1}+i,v+1-i\right),
\end{equation}
where $B(.,.)$ stands for the Beta function. Figure \ref{mc_ssr} illustrates stress–strength reliability of a $v=5$ multicomponent system with varying $s$.

\begin{figure}[H]
    \centering
    \includegraphics[width=0.48\linewidth]{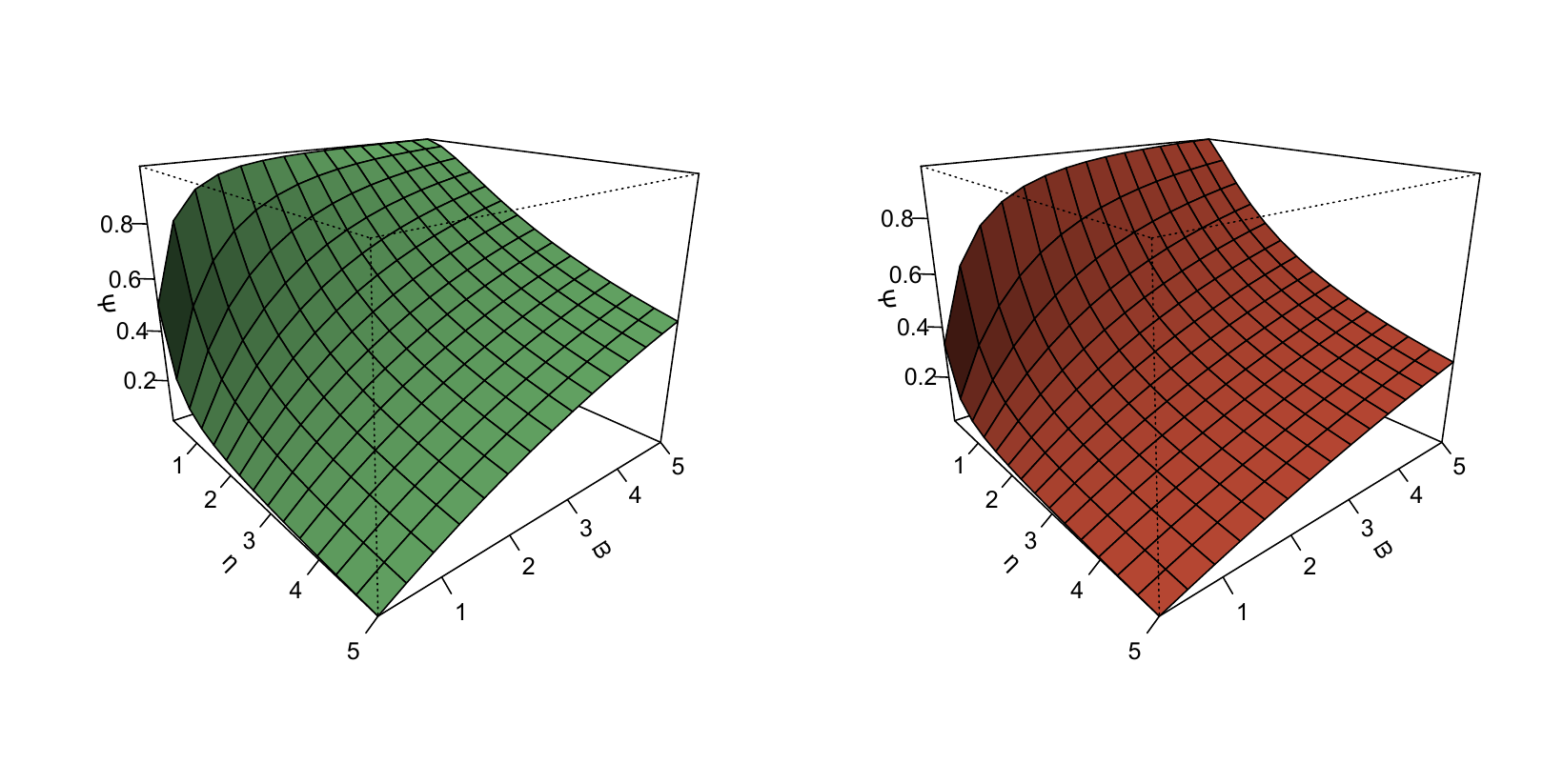}
   \caption{Stress-strength reliability for fixed $v=5$ with $s=3$ (left) and $s=4$ (right).}
    \label{mc_ssr}
\end{figure}
We first obtain MLEs of the parameters $\eta_1,\eta_2$ and $\varpi$ using PT2 censored samples. In constructing the likelihood function, $n$ systems are subjected to a life test, from which the following observations are obtained:

\begin{equation}\label{stress-strength-data}\underbrace{
    \begin{pmatrix}
        X_{11} & X_{12} & \dots & X_{1v}\\
        X_{21} & X_{22} & \dots & X_{2v}\\
        \vdots & \vdots & \ddots & \vdots\\
        X_{n1} & X_{n2} & \dots & X_{nv}
    \end{pmatrix} }_{\text{Observed strength variables}}
    \quad \text{and} \underbrace{
    \begin{pmatrix}
        Y_1\\
        Y_2\\
        \vdots\\
        Y_n
    \end{pmatrix}}_{\text{Observed stress variables}}
\end{equation}
Now, consider $\{X_{i1}, X_{i2},\dots, X_{iv}\}$, $i=1,2,\dots,n$, be the observed PT2 censored samples from $UGR(\eta_2,\varpi)$ and $\{Y_1,Y_2,\dots,Y_n\}$ be the PT2 censored samples from $UGR(\eta_2,\varpi)$ with the CSS $(V, v, R_1, R_2, \dots, R_v)$ and  $(N, n, S_1, S_2, \dots, S_n)$ respectively. Now the likelihood function becomes,
\begin{equation}
    \mathcal{L}(\eta_1,\eta_2,\varpi;x,y)=C_1\prod_{i=1}^n\left( C_2 \prod_{j=1}^v f(x_{ij})\left[1-F(x_{ij})\right]^{R_j}\right)f(y_i)\left[1-F(y_i)\right]^{S_i}
\end{equation}
where constants $C_1$ and $C_2$ are given by,
$$C_1 = N(N − S_1 − 1)\dots(N − S_1 − \dots − S_{n−1} − n + 1),$$ and $$C_2 = V(V − R_1 − 1)\dots(V − R_1 − \dots − R_{v−1} − v + 1).$$

The corresponding log-likelihood function is given by
\begin{equation}\label{observed log-lik}
    \begin{aligned}
        \ell_1(\eta_1,\eta_2,\varpi) = & nv\ln \eta_1+n\ln{\eta_2} + 2n(v+1)\ln{\varpi} -\varpi^2\left\{\sum_{i=1}^n
        \sum_{j=1}^v\left(\ln x_{ij}\right)^2 + \sum_{i=1}^n\left(\ln y_i\right)^2\right\} \\
        & + \sum_{i=1}^n\sum_{j=1}^v\left(\eta_1(1+R_j)-1 \right)\ln\left(1-e^{-(\varpi\ln x_{ij})^2}\right) \\
        & + \sum_{i=1}^n\left(\eta_2(1+S_i)-1 \right)\ln\left(1-e^{-(\varpi\ln y_i)^2}\right)\\
        & + \sum_{i=1}^n\sum_{j=1}^v \ln\left(\frac{\ln x_{ij}}{x_{ij}}\right) + \sum_{i=1}^n\ln\left(\frac{\ln y_i}{y_i}\right).
    \end{aligned}
\end{equation}
Thus, likelihood equations are obtained as follows:
\begin{equation}\label{ml-alp1}
    \frac{\partial\ell_1}{\partial\eta_1} = \frac{nv}{\eta_1} + \sum_{i=1}^n\sum_{j=1}^v(1+R_j)\ln\left(1-e^{-(\varpi\ln x_{ij})^2}\right),
\end{equation}
\begin{equation}\label{ml-alp2}
    \frac{\partial\ell_1}{\partial\eta_2} = \frac{n}{\eta_2} + \sum_{i=1}^n(1+S_i)\ln\left(1-e^{-(\varpi\ln y_i)^2}\right),
\end{equation}
\begin{equation}\label{ml-bet}
    \begin{aligned}
        \frac{\partial\ell_1}{\partial\varpi} = & \frac{2n(v+1)}{\varpi} -2\varpi\left\{\sum_{i=1}^n
        \sum_{j=1}^v\left(\ln x_{ij}\right)^2 + \sum_{i=1}^n\left(\ln y_i\right)^2 \right. \\
        & \left. - \sum_{i=1}^n\sum_{j=1}^v\left(\eta_1(1+R_j)-1 \right) \frac{e^{-(\varpi\ln x_{ij})^2}\left(\ln x_{ij}\right)^2}{1-e^{-(\varpi\ln x_{ij})^2}} 
        - \sum_{i=1}^n\left(\eta_2(1+S_i)-1 \right) \frac{e^{-(\varpi\ln y_i)^2}\left(\ln y_i\right)^2}{1-e^{-(\varpi\ln y_i)^2}}\right\}.
    \end{aligned}
\end{equation}
The MLEs $\hat\eta_1$, $\hat\eta_2$, and $\hat\varpi$ of $\eta_1$, $\eta_2$, and $\varpi$ are obtained by solving the three non-linear equations (\ref{ml-alp1}), (\ref{ml-alp2}), and (\ref{ml-bet}). However, these equations do not admit closed-form solutions; numerical methods are necessary to compute the estimates. In particular, the Newton–Raphson algorithm is employed for this purpose. However, this method is often susceptible to the initial values chosen, which can lead to convergence issues. To address the challenges associated with direct maximization of the likelihood function (\ref{observed log-lik}), we turn to the EM algorithm, which is discussed in the next section.

\subsubsection{MLE via the EM}\label{mle-em}
Suppose $X=(X_{i1},X_{i2},\dots,X_{iv})$ denote observed and $Z=(Z_{i1},Z_{i2},\dots,Z_{iv})$ denote censored strength data. Note that $Z_{ij}$ is $1\times R_j$ vector denoting $(Z_{ij1},Z_{ij2},\dots,Z_{ijR_j})$ for $i=1,2,\dots,n$ and $j=1,2,\dots,v$. Similarly let $Y=(Y_1,Y_2,\dots,Y_n)$ be observed and $W=(W_{1},W_{2},\dots,W_n)$ be the censored stress data. Also note that, $W_i$ is $1\times S_i$ vector denoting $(W_{i1},W_{i2},\dots,W_{iS_i})$ for $i=1,2,\dots,n$. Observe that $(X, Z)$ and $(Y, W)$ constitute the complete data for the strength and stress variables. Define the complete data set as $\mathcal{D}_c = \{(X, Z),\, (Y, W)\}$. Accordingly, upon omitting the additive constant, the log-likelihood function corresponding to $\mathcal{D}_c$ is given by

\begin{equation}
     \begin{aligned}
         \ell_1^c(\eta_1,\eta_2,\varpi) = & Vn\ln \eta_1 + N\ln \eta_2 + 2(Vn+N)\ln \varpi \\
         & - \varpi^2\left\{\sum_{i=1}^n\sum_{j=1}^v\left(\ln x_{ij}\right)^2 + \sum_{i=1}^n\sum_{j=1}^v\sum_{l=1}^{R_j}\left(\ln z_{ijl}\right)^2 + \sum_{i=1}^n\left(\ln y_i\right)^2 + \sum_{i=1}^n\sum_{u=1}^{S_i}\left(\ln w_{iu}\right)^2\right\} \\
        & + (\eta_1-1) \left\{\sum_{i=1}^n\sum_{j=1}^v\ln\left(1-e^{-(\varpi\ln x_{ij})^2}\right) + \sum_{i=1}^n\sum_{j=1}^v\sum_{l=1}^{R_j} \ln\left(1-e^{-(\varpi\ln z_{ijl})^2}\right)\right\} \\
        & + (\eta_2-1) \left\{\sum_{i=1}^n\ln\left(1-e^{-(\varpi\ln y_{i})^2}\right) + \sum_{i=1}^n\sum_{u=1}^{S_i} \ln\left(1-e^{-(\varpi\ln w_{iu})^2}\right)\right\}.
     \end{aligned}
\end{equation}

\begin{lemma}\label{lemma1}
    Consider a PT2 censored sample $X_1, X_2,\dots, X_n$ from $UGR(\eta,\varpi)$  distribution. Then, given $Z_{x_i}>X_i$ and $i=1,2,\dots,n$, we have the following results,
    \begin{enumerate}[(i)]
        \item $E\left[\left(\ln z_{x_i}\right)^2 \mid z_{x_i} > x_{i}\right] = -\frac{\eta}{\varpi^2(1-F(x_{i}))}\int_0^{1-e^{-(\varpi\ln x_{i})^2}} \ln(1-u)u^{\eta-1}du,\quad i=1,2,\dots,n.$
        \item $E\left[\ln\left(1-e^{-(\varpi\ln z_{x_i})^2}\right) \mid z_{x_i} > x_{i}\right] = \ln\left(1-e^{-(\varpi\ln x_{i})^2}\right)-\frac{1}{\eta},\quad i=1,2,\dots,n.$
    \end{enumerate}
\end{lemma}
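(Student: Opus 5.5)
The plan is to work with the conditional density of a censored unit given that it survives past the observed failure time, $f(z)/(1-F(x_i))$ for $x_i<z<1$, and to change variables so that both expectations reduce to elementary integrals. The natural substitution is
\[
u = 1-e^{-(\varpi\ln z)^2},
\]
because the survival function from the paper's formula for $S_T$ is then simply $S_T(z)=u^{\eta}$. We also have $(\ln z)^2=-\ln(1-u)/\varpi^2$, and the quantity inside the logarithm in (ii) is exactly $u$.

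\textbf{Step 1 (distribution of $U$).} On $(0,1)$, $(\ln z)^2$ decreases as $z$ increases, so $u$ is a decreasing function of $z$, running from $1$ at $z\to0^+$ to $0$ at $z\to1^-$. Since $P(Z>z)=u^{\eta}$ and $\{Z>z\}=\{U<u\}$, the variable $U$ has distribution function $u^{\eta}$, that is, density $\eta u^{\eta-1}$ on $(0,1)$. Write $u_i=1-e^{-(\varpi\ln x_i)^2}$. Then the event $\{Z_{x_i}>x_i\}$ is $\{U<u_i\}$, which has probability $u_i^{\eta}=1-F(x_i)$. Consequently, conditionally on $Z_{x_i}>x_i$, $U$ has density $\eta u^{\eta-1}/(1-F(x_i))$ on $(0,u_i)$.

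\textbf{Step 2 (part (i)).} Substitute $(\ln z)^2=-\ln(1-u)/\varpi^2$ into the conditional expectation. This gives
\[
E\left[(\ln Z)^2\mid Z>x_i\right]=-\frac{\eta}{\varpi^2(1-F(x_i))}\int_0^{u_i}\ln(1-u)\,u^{\eta-1}\,du,
\]
which is exactly the stated expression; the integral is left unevaluated.

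\textbf{Step 3 (part (ii)).} We need $E[\ln U\mid U<u_i]=u_i^{-\eta}\int_0^{u_i}\eta u^{\eta-1}\ln u\,du$. Integrating by parts with $d(u^{\eta})$ gives
\[
\int_0^{u_i}\eta u^{\eta-1}\ln u\,du=u_i^{\eta}\ln u_i-\frac{u_i^{\eta}}{\eta}.
\]
The boundary term at $0$ vanishes because $u^{\eta}\ln u\to0$ as $u\to0^+$ for $\eta>0$. Dividing by $u_i^{\eta}$ yields $\ln u_i-1/\eta$, which is (ii).

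The only delicate point I expect is Step 1: the map $z\mapsto u$ is decreasing, so the direction of the conditioning event must be tracked carefully ($z>x_i$ becomes $u<u_i$), and the Jacobian must be handled with the correct sign. Once $U$ is identified as having distribution function $u^{\eta}$ on $(0,1)$, both parts are routine.
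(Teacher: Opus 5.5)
Your proposal is correct and follows essentially the same route as the paper: the left-truncated density $f(z)/(1-F(x_i))$ on $(x_i,1)$ combined with the substitution $u=1-e^{-(\varpi\ln z)^2}$, which you phrase as showing that $U$ has distribution function $u^{\eta}$ and that the event $Z>x_i$ becomes $U<u_i$. Your integration by parts in part (ii), including the check that the boundary term $u^{\eta}\ln u\to 0$ vanishes, spells out the step the paper only describes as ``simplifying the integral.''
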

\begin{proof}
    Proof is given in Appendix \hyperref[appendix-1]{1}.
\end{proof}
Now pseudo-log-likelihood function is computed. Ignoring the constant terms, it is denoted by $\ell_s(\eta_1,\eta_2,\varpi)$. Now observe that
\begin{equation}
     \begin{aligned}
         \ell_1^s(\eta_1,\eta_2,\varpi) = & Vn\ln \eta_1 + N\ln \eta_2 + 2(Vn+N)\ln \varpi - \varpi^2\left\{\sum_{i=1}^n\sum_{j=1}^v\left(\ln x_{ij}\right)^2 + \sum_{i=1}^n\left(\ln y_i\right)^2 \right\} \\
        & + (\eta_1-1) \left\{\sum_{i=1}^n\sum_{j=1}^v\ln\left(1-e^{-(\varpi\ln x_{ij})^2}\right)\right\} + (\eta_2-1) \left\{\sum_{i=1}^n\ln\left(1-e^{-(\varpi\ln y_{i})^2}\right) \right\} \\
        & - \varpi^2 \left\{ \sum_{i=1}^n\sum_{j=1}^v\sum_{l=1}^{R_j}E\left[\left(\ln z_{ijl}\right)^2 \mid z_{ijl} > x_{ij}\right] 
        + \sum_{i=1}^n\sum_{u=1}^{S_i}E\left[\left(\ln w_{iu}\right)^2 \mid w_{iu} > y_i\right]\right\} \\
        & + (\eta_1-1) \left\{\sum_{i=1}^n\sum_{j=1}^v\sum_{l=1}^{R_j} E\left[\ln\left(1-e^{-(\varpi\ln z_{ijl})^2}\right) \mid z_{ijl} > x_{ij}\right]\right\}\\
        & +  (\eta_2-1) \left\{\sum_{i=1}^n\sum_{u=1}^{S_i} E\left[\ln\left(1-e^{-(\varpi\ln w_{iu})^2}\right) \mid w_{iu} > y_i \right]\right\}.
     \end{aligned}
\end{equation}
The expectations appearing in the above equation are evaluated using Lemma \ref{lemma1}, and explicit expressions are provided in Appendix \hyperref[appendix-2]{2}.

Additionally, during the M-step of the EM algorithm, 
$\ell_s(\eta_1,\eta_2,\varpi)$ is maximized with respect to 
$\eta_1,\eta_2,$ and $\varpi$. Let parameter estimates at the 
$m$-th iteration be denoted by 
$\left(\eta_1^{(m)},\eta_2^{(m)},\varpi^{(m)}\right)$. 
Then, the updated estimates 
$\left(\eta_1^{(m+1)},\eta_2^{(m+1)},\varpi^{(m+1)}\right)$ 
are obtained by maximizing the following expression:

\begin{equation}
     \begin{aligned}
         \ell_1^{obj}(\eta_1,\eta_2,\varpi) = & Vn\ln \eta_1 + N\ln \eta_2 + 2(Vn+N)\ln \varpi - \varpi^2\left\{\sum_{i=1}^n\sum_{j=1}^v\left(\ln x_{ij}\right)^2 + \sum_{i=1}^n\left(\ln y_i\right)^2 \right\} \\
        & + (\eta_1-1) \left\{\sum_{i=1}^n\sum_{j=1}^v\ln\left(1-e^{-(\varpi\ln x_{ij})^2}\right)\right\} + (\eta_2-1) \left\{\sum_{i=1}^n\ln\left(1-e^{-(\varpi\ln y_{i})^2}\right) \right\} \\
        & - \varpi^2 \left\{ \sum_{i=1}^n\sum_{j=1}^v R_j A_1(x_{ij};\eta_1^m,\varpi^m) + \sum_{i=1}^n S_i A_2(y_i;\eta_2^m,\varpi^m)\right\} \\
        & + (\eta_1-1) \left\{\sum_{i=1}^n\sum_{j=1}^vR_j B_1(x_{ij};\eta_1^m,\varpi^m)\right\}  +  (\eta_2-1) \left\{\sum_{i=1}^mS_i B_2(y_i;\eta_2^m,\varpi^m)\right\}.
     \end{aligned}
\end{equation}

Therefore, we have the updated estimate $\varpi^{(m+1)}$ of $\varpi$, which can be computed by solving the equation
\begin{equation}
    \mathcal{H}(\varpi)=0,
\end{equation}
where 
\begin{equation*}
    \begin{aligned}
        \mathcal{H}(\varpi)= & \frac{2(Vn+N)}{\varpi}-2\varpi\left\{\sum_{i=1}^n\sum_{j=1}^v\left(\ln x_{ij}\right)^2 + \sum_{i=1}^n\left(\ln y_i\right)^2 - \sum_{i=1}^n\sum_{j=1}^v\left(\hat\eta_1(\varpi)-1 \right) \frac{e^{-(\varpi\ln x_{ij})^2}\left(\ln x_{ij}\right)^2}{1-e^{-(\varpi\ln x_{ij})^2}} \right. \\
       & \left. - \sum_{i=1}^n\left(\hat\eta_2(\varpi)-1 \right) \frac{e^{-(\varpi\ln y_i)^2}\left(\ln y_i\right)^2}{1-e^{-(\varpi\ln y_i)^2}} + \sum_{i=1}^n\sum_{j=1}^v R_jA_1(x_{ij};\eta_1^m,\varpi^m) + \sum_{i=1}^mS_iA_2(y_i;\eta_2^m,\varpi^m)  \right\}. 
    \end{aligned}
\end{equation*}
with 
$$\hat\eta_1(\varpi)=-\frac{Vn}{\sum_{i=1}^n\sum_{j=1}^v\left[\ln\left(1-e^{-(\varpi\ln x_{ij})^2}\right) +R_jB_1(x_{ij};\eta_1^m,\varpi^m)\right]},$$
$$\hat\eta_2(\varpi)=-\frac{N}{\sum_{i=1}^n\left[\ln\left(1-e^{-(\varpi\ln y_i)^2}\right) + S_iB_2(y_i;\eta_2^m,\varpi^m)\right]}.$$
Subsequently, the updated estimates of $\eta_1$ and $\eta_2$ can be computed using the expressions $\eta_1^{m+1}=\hat\eta_1(\varpi^{m+1})$ and $\eta_2^{m+1}=\hat\eta_2(\varpi^{m+1})$ respectively. Therefore MLE of $\Psi_{s,v}$ is obtained as 
\begin{equation}
    \hat\Psi_{s,v}^{EM} = \frac{\hat\eta_2}{\hat\eta_1}\sum_{i=s}^v\binom{v}{i}B\left(\frac{\hat\eta_2}{\hat\eta_1}+i,v+1-i\right).
\end{equation}

\subsection{Fisher information matrix (FIM)}\label{FIM}
In this section, we compute observed and expected Fisher information matrices using the missing information principle as discussed in \citet{louis}. The observed FIM plays an inferential role for obtaining ACI of unknown parameters $\Theta=(\eta_1,\eta_2,\varpi)$ and also any parametric function of $\Theta$, whereas the expected FIM is used for constructing optimal censoring plans.

\subsubsection{Observed Fisher information matrix}\label{OFIM}
Let $\mathcal{I}_{X,Y}(\Theta)$ denote observed information, $\mathcal{I}_{\mathcal{D}_c}(\Theta)$ be the complete information, and $\mathcal{I}_{\mathcal{D}_c\mid X,Y}(\Theta)$ denotes the missing information with $(\mathcal{D}_c\mid X,Y)$ as missing data. Thus, we have:
\begin{equation}\label{observed_mat}
    \mathcal{I}_{X,Y}(\Theta) = \mathcal{I}_{\mathcal{D}_c}(\Theta)-\mathcal{I}_{\mathcal{D}_c\mid X,Y}(\Theta).
\end{equation} 
The complete information matrix is
$$\mathcal{I}_{\mathcal{D}_c}(\Theta)=-E\left[\frac{\partial^2\ell_c(\mathcal{D}_c;\Theta)}{\partial\Theta^2}\right].$$
Furthermore, the Fisher information matrix corresponding to the unobserved observations is formulated as
$$\mathcal{I}_{\mathcal{D}_c \mid X,Y}(\Theta)= -E\left[\frac{\partial^2\ell_c(\mathcal{D}_c\mid X,Y,\Theta)}{\partial\Theta^2}\right].$$
Both $\mathcal{I}_{\mathcal{D}_c}(\Theta)$ and $\mathcal{I}_{\mathcal{D}_c \mid X,Y}(\Theta)$ are $3\times3$ real square matrices, where individual elements are $a_{ij}(\eta_1,\eta_2,\varpi)$ and $b_{ij}(\eta_1,\eta_2,\varpi)$ for $i,j=1,2,3$ respectively. The corresponding explicit expressions are given in Appendix \hyperref[appendix-2]{2}.

\begin{theorem}\label{theorem1}
    Suppose, $\hat \Psi_{s,v}^{EM}$ denotes MLE of $\Psi_{s,v}$, then we have 
    $$\left(\hat \Psi_{s,v}^{EM} - \Psi_{s,v}\right)\xrightarrow{D}N\left(0,\mathcal{V}_{\hat \Psi_{s,v}^{EM}}\right),$$
    where $$\mathcal{V}_{\hat \Psi_{s,v}^{EM}}=\left(\frac{\partial\Psi_{s,v}}{\partial\eta_1}\right)^2\mathcal{I}_{11}^{-1}(\hat\Theta^{EM}) + \left(\frac{\partial\Psi_{s,v}}{\partial\eta_2}\right)^2\mathcal{I}_{22}^{-1}(\hat\Theta^{EM}) + 2\left(\frac{\partial\Psi_{s,v}}{\partial\eta_1}\right)\left(\frac{\partial\Psi_{s,v}}{\partial\eta_2}\right)\mathcal{I}_{12}^{-1}(\hat\Theta^{EM})$$ with $\hat{\Theta}^{EM}$ denoting MLE of $\Theta$, and $\mathcal{I}^{-1}(\hat\Theta^{EM})$ being inverse of observed Fisher information matrix evaluated at $\hat{\Theta^{EM}}$.
\end{theorem}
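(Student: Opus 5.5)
The plan is to combine the standard asymptotic normality of the MLE with the delta method. First I will argue that $\hat\Theta^{EM}$ coincides with the MLE of $\Theta=(\eta_1,\eta_2,\varpi)$ based on the observed PT2 censored data. The EM iterations increase the observed log-likelihood $\ell_1$ in (\ref{observed log-lik}) monotonically, and a fixed point solves the score equations (\ref{ml-alp1})--(\ref{ml-bet}). I will therefore assume the EM limit is the consistent root of the likelihood equations.

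Next I will check the usual regularity conditions for the UGR family. The support $(0,1)$ does not depend on the parameters, the density is three times differentiable in $(\eta_1,\eta_2,\varpi)$ on the open parameter space, and the third derivatives are dominated by integrable functions. Under PT2 censoring the observations are not i.i.d. However, one can use the representation of progressively censored order statistics through independent uniform or exponential spacings (as in \citet{Balakrishnan book}). This lets us invoke the standard result for progressively censored samples:
\[
\sqrt{n}\,(\hat\Theta^{EM}-\Theta)\xrightarrow{D}N_3\bigl(0,\,\mathcal{I}^{-1}(\Theta)\bigr),
\]
where $\mathcal{I}(\Theta)$ is the per-unit information. By Louis' missing information principle (\ref{observed_mat}), this information equals $\mathcal{I}_{\mathcal{D}_c}-\mathcal{I}_{\mathcal{D}_c\mid X,Y}$. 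The observed information evaluated at $\hat\Theta^{EM}$ is a consistent estimator of it, by consistency of $\hat\Theta^{EM}$ and the continuity of the entries $a_{ij},b_{ij}$.

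Then I will apply the delta method to $g(\Theta)=\Psi_{s,v}$. This $g$ depends only on the ratio $\eta_2/\eta_1$ and is continuously differentiable, since the Beta function is smooth in its first argument for positive arguments. Its gradient is $(\partial\Psi_{s,v}/\partial\eta_1,\partial\Psi_{s,v}/\partial\eta_2,0)$, so the $\varpi$-row and $\varpi$-column of $\mathcal{I}^{-1}$ drop out. The quadratic form $\nabla g^\top\mathcal{I}^{-1}\nabla g$ then reduces exactly to the three-term expression $\mathcal{V}_{\hat\Psi_{s,v}^{EM}}$. The partial derivatives are obtained by differentiating the Beta functions via digamma functions. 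By Slutsky's theorem, replacing $\Theta$ with $\hat\Theta^{EM}$ in both the information matrix and the gradient does not change the limit. The stated form (without $\sqrt n$) is read as the usual approximate statement, with the scaling absorbed into the observed information of the whole sample.

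The main obstacle is the second step: justifying asymptotic normality for PT2 censored data, where the sample is neither i.i.d. nor fixed-size in the usual sense, and confirming that the observed information built from Louis' decomposition is consistent. I would handle this by citing the established asymptotic theory for progressive censoring. In that framework the removal pattern scales with $N$ and $V$, and one needs the limiting information matrix to exist and be positive definite. I would verify positive definiteness of $\mathcal{I}$ only at the level of noting that the diagonal complete-data blocks $Vn/\eta_1^2$ and $N/\eta_2^2$ dominate the missing-information terms.
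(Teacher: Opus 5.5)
Your proposal is correct and follows the same route as the paper's proof: asymptotic normality of $\hat\Theta^{EM}$ under standard regularity conditions, followed by the delta method with the quadratic form $\nabla\Psi_{s,v}^{\top}\mathcal{I}^{-1}(\hat\Theta^{EM})\nabla\Psi_{s,v}$. You also observe that $\Psi_{s,v}$ depends only on $\eta_2/\eta_1$, so $\partial\Psi_{s,v}/\partial\varpi=0$; this is what justifies reducing the quadratic form to the three-term expression, a step the paper performs without comment.
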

\begin{proof}
    Proof is given in Appendix \hyperref[appendix-1]{1}.
\end{proof}

Therefore, by applying Theorem \ref{theorem1}, the asymptotic $100(1 − \gamma)\%$ two-sided confidence interval for $\Psi_{s,v}$ is given by
\[\left(\hat \Psi_{s,v}^{EM}\pm q_{\gamma/2}\sqrt{\mathcal{V}_{\hat \Psi_{s,v}^{EM}}}\right),\]
where, $q_{\gamma/2}$ denotes the upper $(\gamma/2)$th quantile of the standard normal distribution $N(0, 1)$.

\subsubsection{Expected information matrix}\label{EFIM}
This subsection outlines the expected Fisher information structure arising from PT2 censored data. The second partial derivatives of the log-likelihood function (\ref{observed log-lik}) with respect to
$(\eta_1, \eta_2, \varpi)$ are obtained as
$$\frac{\partial^2\ell_1}{\partial\eta_1^2}=-\frac{nv}{\eta_1}, \quad\frac{\partial^2\ell_1}{\partial\eta_2^2}=-\frac{n}{\eta_2}, \quad \frac{\partial^2\ell_1}{\partial\eta_1\partial\eta_2}=\frac{\partial^2\ell_1}{\partial\eta_2\partial\eta_1}=0,$$
$$\frac{\partial^2\ell_1}{\partial\eta_1\partial\varpi}=\sum_{i=1}^n\sum_{j=1}^v2\varpi(1+R_j)\frac{e^{-(\varpi\ln x_{ij})^2}\left(\ln x_{ij}\right)^2}{1-e^{-(\varpi\ln x_{ij})^2}}, \quad
\frac{\partial^2\ell_1}{\partial\eta_2\partial\varpi}=\sum_{i=1}^n2\varpi(1+S_i) \frac{e^{-(\varpi\ln y_i)^2}\left(\ln y_i\right)^2}{1-e^{-(\varpi\ln y_i)^2}},$$
\begin{align*}
    \frac{\partial^2\ell_1}{\partial\varpi^2}= & -\frac{2n(v+1)}{\varpi^2} -2\left\{\sum_{i=1}^n
        \sum_{j=1}^v\left(\ln x_{ij}\right)^2 + \sum_{i=1}^n\left(\ln y_i\right)^2 \right\} \\
        & + 2\sum_{i=1}^n\sum_{j=1}^v\left(\eta_1(1+R_j)-1 \right) \frac{e^{-(\varpi\ln x_{ij})^2}\left(\ln x_{ij}\right)^2\left(1-e^{-(\varpi\ln x_{ij})^2}-2\varpi^2\left(\ln x_{ij}\right)^2\right)}{\left(1-e^{-(\varpi\ln x_{ij})^2}\right)^2} \\
        & + 2\sum_{i=1}^n\left(\eta_2(1+S_i)-1 \right) \frac{e^{-(\varpi\ln y_{i})^2}\left(\ln y_{i}\right)^2\left(1-e^{-(\varpi\ln y_{i})^2}-2\varpi^2\left(\ln y_{i}\right)^2\right)}{\left(1-e^{-(\varpi\ln y_i)^2}\right)^2}.
\end{align*}

\begin{lemma}\label{lemma2}
    Consider a PT2 censored sample $X_1, X_2,\dots, X_n$ from $UGR(\eta,\varpi)$ distribution  with CS $(N,n,R_1,R_2,\dots,R_n)$. Then we have the following results.
    \begin{enumerate}[(i)]
        \item $E\left[ \left(\ln x_{i}\right)^2 \right]=-\frac{\eta C_{i-1}}{\varpi^2}\sum_{l=1}^ia_{i,l}B(\eta\gamma_l,1)[\psi(1)-\psi(1+\eta\gamma_l)],$
        \item $E\left[\frac{e^{-(\varpi\ln x_i)^2}\left(\ln x_i\right)^2}{1-e^{-(\varpi\ln x_i)^2}}\right]=-\frac{\eta C_{i-1}}{\varpi^2}\sum_{l=1}^ia_{i,l}B(\eta\gamma_l-1,2)[\psi(2)-\psi(1+\eta\gamma_l)],$
        \item $E\left[\frac{e^{-(\varpi\ln x_{i}i)^2}\left(\ln x_{i}\right)^2\left(1-e^{-(\varpi\ln x_{i}i)^2}-2\varpi^2\left(\ln x_{i}\right)^2\right)}{\left(1-e^{-(\varpi\ln x_i)^2}\right)^2}\right] = -\frac{\eta C_{i-1}}{\varpi^2}\sum_{l=1}^ia_{i,l}\bigg[B(\eta\gamma_l-1,2)[\psi(2)-\psi(\eta\gamma_l+1)] + B(\eta\gamma_l-2,2)[(\psi(2)-\psi(\eta\gamma_l))^2 + \psi^{'}(2)-\psi^{'}(\eta\gamma_l)]\bigg],$
    \end{enumerate}
    where $\gamma_l=v-l+1+\sum_{l=1}^vR_l, C_{i-1}=\prod_{l=1}^i\gamma_l,a_{i,l}=\prod_{q=1,q\neq l}^i\frac{1}{\gamma_q-\gamma_l},$ for $i=1,2,\dots,n$. Also $B(x,y)$ denotes the standard beta function, $\psi(x)$ and $\psi^{'}(x)$ are the digamma and trigamma functions, respectively.
\end{lemma}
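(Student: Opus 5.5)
The plan is to use the standard representation of the marginal density of the $i$th progressively Type II censored order statistic, as given in \citet{Balakrishnan censoring}:
\[
f_{X_{i}}(x)=C_{i-1}\sum_{l=1}^{i}a_{i,l}\,f(x)\,[1-F(x)]^{\gamma_l-1},
\]
with $\gamma_l$, $C_{i-1}$ and $a_{i,l}$ as in the statement. For the UGR model, $1-F(x)=\big(1-e^{-(\varpi\ln x)^2}\big)^{\eta}$. Each of (i)--(iii) then reduces to a finite linear combination of integrals of the form
\[
\int_0^1 g(x)\,f(x)\,\big(1-e^{-(\varpi\ln x)^2}\big)^{\eta(\gamma_l-1)}\,dx .
\]
It therefore suffices to evaluate one such integral for each of the three functions $g$ and then sum over $l$.

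\textbf{Substitution.} In each integral I will substitute $u=1-e^{-(\varpi\ln x)^2}$. This gives $(\ln x)^2=-\ln(1-u)/\varpi^2$ and
\[
f(x)\,dx=\eta\,u^{\eta-1}\,du
\]
(up to orientation). The weight then becomes $\eta\,u^{\eta\gamma_l-1}\,du$ on $(0,1)$, and the three integrands become:
\begin{itemize}
\item for (i), $-\varpi^{-2}\ln(1-u)$;
\item for (ii), $-\varpi^{-2}(1-u)\ln(1-u)/u$;
\item for (iii), $-\varpi^{-2}(1-u)\ln(1-u)/u$ minus $2\varpi^{-2}(1-u)[\ln(1-u)]^2/u^2$, using $2\varpi^2(\ln x)^2=-2\ln(1-u)$.
\end{itemize}

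\textbf{Beta-function identities.} I then apply the standard identities
\[
\int_0^1 u^{a-1}(1-u)^{b-1}\ln(1-u)\,du=B(a,b)[\psi(b)-\psi(a+b)],
\]
\[
\int_0^1 u^{a-1}(1-u)^{b-1}[\ln(1-u)]^2\,du=B(a,b)\big[(\psi(b)-\psi(a+b))^2+\psi'(b)-\psi'(a+b)\big],
\]
both obtained by differentiating $B(a,b)$ with respect to $b$ once and twice. This yields:
\begin{itemize}
\item for (i), $a=\eta\gamma_l$, $b=1$, giving $B(\eta\gamma_l,1)[\psi(1)-\psi(1+\eta\gamma_l)]$;
\item for (ii), $a=\eta\gamma_l-1$, $b=2$, giving $B(\eta\gamma_l-1,2)[\psi(2)-\psi(\eta\gamma_l+1)]$;
\item for (iii), the first piece repeats (ii), and the squared-log piece takes $a=\eta\gamma_l-2$, $b=2$, producing the $B(\eta\gamma_l-2,2)$ term with the digamma and trigamma structure.
\end{itemize}
Collecting the prefactor $-\eta C_{i-1}/\varpi^2$ and summing over $l$ gives the displayed formulas.

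\textbf{Main obstacle.} The difficulty is bookkeeping in (iii): matching signs and constants so that the second term appears exactly as stated. The coefficient $2$ and the sign from $-2\varpi^2(\ln x)^2$ must combine to reproduce the paper's normalization, and the $B(\eta\gamma_l-2,2)$ term has $a+b=\eta\gamma_l$. I would check the constants against the simplest case $i=1$, where the sum collapses to the single term $a_{1,1}=1$.

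Convergence conditions should also be recorded. The integrals in (ii) and (iii) require $\eta\gamma_l>1$ and $\eta\gamma_l>2$ respectively. These hold whenever the censoring numbers make $\gamma_l$ large. Otherwise the identities hold by analytic continuation, or the expectation is infinite, and I would note this as an implicit assumption of the lemma.
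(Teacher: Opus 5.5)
Your route is the paper's own: the mixture representation of the density of the $i$th PT2 order statistic, the substitution $u=1-e^{-(\varpi\ln x)^2}$ (which turns $f(x)[1-F(x)]^{\gamma_l-1}\,dx$ into $\eta u^{\eta\gamma_l-1}\,du$ on $(0,1)$), and the first and second $b$-derivatives of $B(a,b)$. For (i) and (ii) this is complete. In (ii) you correctly get the leading minus sign of the statement; the last line of the paper's computation of (ii) has a stray plus sign.

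The gap is in (iii), exactly where you postponed the ``bookkeeping''. Your decomposition $-\varpi^{-2}(1-u)\ln(1-u)/u-2\varpi^{-2}(1-u)[\ln(1-u)]^2/u^2$ is right. Multiply by $\eta u^{\eta\gamma_l-1}$ and factor out $-\eta/\varpi^2$. The squared-log piece is then $2\int_0^1u^{\eta\gamma_l-3}(1-u)[\ln(1-u)]^2\,du=2B(\eta\gamma_l-2,2)\bigl[(\psi(2)-\psi(\eta\gamma_l))^2+\psi'(2)-\psi'(\eta\gamma_l)\bigr]$. Nothing in the calculation supplies a compensating factor $1/2$, so the constants cannot be made to ``reproduce the paper's normalization''. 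Your closing sentence, that collecting terms gives the displayed formulas, is therefore false for (iii). What your argument proves is
\[
-\frac{\eta C_{i-1}}{\varpi^2}\sum_{l=1}^{i}a_{i,l}\Bigl\{B(\eta\gamma_l-1,2)\bigl[\psi(2)-\psi(\eta\gamma_l+1)\bigr]+2\,B(\eta\gamma_l-2,2)\bigl[(\psi(2)-\psi(\eta\gamma_l))^2+\psi'(2)-\psi'(\eta\gamma_l)\bigr]\Bigr\},
\]
that is, (iii) with the $B(\eta\gamma_l-2,2)$ term doubled. The paper's proof drops the same factor when it passes from $\int_0^1u^{\eta\gamma_l-3}(1-u)\ln(1-u)\,(u+2\ln(1-u))\,du$ to the Beta form, so the printed (iii) is itself wrong. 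Your proposed check at $i=1$ would expose the discrepancy rather than remove it. You should state and prove the corrected identity instead of claiming the displayed one.

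Two smaller points.
\begin{itemize}
\item The representation you quote requires $\gamma_l=\sum_{k=l}^{n}(R_k+1)$, the number of units on test just before the $l$th failure. The formula for $\gamma_l$ in the statement is garbled, so ``as in the statement'' should be replaced by this.
\item Your convergence caveat is too pessimistic. Write $s=(\varpi\ln x)^2$. The integrands of (ii) and (iii) are continuous on $(0,\infty)$ with finite limits at both ends, hence bounded. Also $-\ln(1-u)$ is integrable against $u^{\eta\gamma_l-1}$. So all three expectations are finite for every $\eta>0$, and the ``expectation is infinite'' alternative never occurs. Only the closed forms need interpretation when $\eta\gamma_l\le 2$. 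There, the apparent poles of $B(\eta\gamma_l-1,2)$ and $B(\eta\gamma_l-2,2)$ at $\eta\gamma_l\in\{1,2\}$ are removable, because the accompanying digamma/trigamma brackets vanish at those points. The formulas then hold by analytic continuation.
\end{itemize}
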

\begin{proof}
    Proof is given in Appendix \hyperref[appendix-1]{1}.
\end{proof}

We now have the expected Fisher information matrix given by
\begin{equation}
    \mathcal{I}_E(\Theta)=
    \begin{pmatrix}
        e_{11} & e_{12} & e_{13} \\[5pt]
        e_{21} & e_{22} & e_{23} \\[5pt]
        e_{31} & e_{32} & e_{33} \\[5pt]
\end{pmatrix}
\end{equation}
where \[ e_{p,q} = -E\left[\frac{\partial^2  \ell}{\partial \Theta_p\partial\Theta_q}\right];\quad p,q=1,2,3.\] The expectations involved in the matrix are derived using the Lemma \ref{lemma2}, and its elements are given in Appendix \hyperref[appendix-2]{2}. 

\subsection{Maximum product of spacing (MPS) estimation}\label{mle-mps}
The MPS method is initially discussed in \citet{cheng} and \citet{Ranneby}. It is a useful alternative to likelihood estimation.\\
Suppose $X_{(1)}, \ldots, X_{(n)}$ represent the order statistics from a population characterized by a continuous distribution function $F(x;\boldsymbol{\theta})$, with $\boldsymbol{\theta} \in \Theta \subset \mathbb{R}^p$. By setting $F(x_{(0)};\boldsymbol{\theta}) = 0$ and $F(x_{(n+1)};\boldsymbol{\theta}) = 1$, the spacings are obtained as

\[
D_i(\boldsymbol{\theta}) = F(x_{(i)}; \boldsymbol{\theta}) - F(x_{(i-1)}; \boldsymbol{\theta}), \quad \text{for } i = 1, 2, \ldots, n+1,
\] where $\sum\limits_{i=1}^{n+1}D_i(\theta)=1$. \citet{dutta24} describes useful applications of this method.

The objective function promotes uniformity of the probability–integral transform and minimizes the discrepancy between empirical and theoretical probability spacings, providing an information-theoretic foundation for robust parametric inference. So, based on the PT2 censored stress-strength data, we have proposed the product of the spacing function as
\begin{equation}
    \begin{aligned}
        \mathcal{Q}(\eta_1,\eta_2,\varpi) = & \prod_{i=1}^n\left(F(x_{i1})[1-F(x_{iv})]\prod_{j=2}^v[F(x_{ij})-F(x_{ij-1})]\right)\prod_{i=1}^n\prod_{j=2}^v[1-F(x_{ij})]^{R_j}\\
        & \times F(y_{1})[1-F(y_{n})]\prod_{i=2}^n[F(y_{i})-F(y_{i-1})]\prod_{i=1}^n[1-F(y_{i})]^{S_i}.
    \end{aligned}
\end{equation}
The corresponding log-likelihood function,  $\ell_2(\eta_1,\eta_2,\varpi) = \ln \mathcal{Q}(\eta_1,\eta_2,\varpi)$, is then given by:
\begin{equation}
    \begin{aligned}
        \ell_2(\eta_1,\eta_2,\varpi) = & \sum_{i=1}^n \left(\ln F(x_{i1}) + \ln [1-F(x_{iv})] + \sum_{j=2}^v \ln  [F(x_{ij})-F(x_{ij-1})] \right) \\
        & + \sum_{i=1}^n\sum_{j=1}^vR_j\ln [1-F(x_{ij})] + \ln F(y_{1}) + \ln [1-F(y_{n})] + \sum_{i=2}^n\ln [F(y_{i})-F(y_{i-1})] \\
        & + \sum_{i=1}^nS_i\ln [1-F(y_{i})].
    \end{aligned}
\end{equation}
The required estimates $\tilde{\eta_1}, \tilde{\eta_2}$ and $\tilde{\varpi}$ of $\eta_1,\eta_2$ and $\varpi$ are obtained by solving
$$ \frac{\partial\ell_2}{\partial\eta_1}=0;\quad \frac{\partial\ell_2}{\partial\eta_2}=0; \quad\text{and} \quad\frac{\partial\ell_2}{\partial\varpi}=0.$$
The resulting nonlinear equations are solved through a Newton-Raphson scheme, using the 
MLEs as starting values to ensure rapid and stable convergence. Consequently, the MPS estimates of $\Psi_{s,v}$ is obtained as
\begin{equation}
    \tilde\Psi_{s,v}^{MPS} = \frac{\tilde\eta_2}{\tilde\eta_1}\sum_{i=s}^v\binom{v}{i}B\left(\frac{\tilde\eta_2}{\tilde\eta_1}+i,v+1-i\right).
\end{equation}

Under standard regularity conditions, the consistency and asymptotic efficiency of the MPS estimator have been established by \citet{Anatolyev}, yielding:
\[
\left(\tilde{\Theta}^{MPS} - \Theta\right) \xrightarrow{D} N\left(0, \mathcal{I}_M^{-1}(\tilde\Theta^{MPS})\right),
\]

where $\mathcal{I}_M(\tilde\Theta^{MPS})$ is the observed FIM of MPS estimators, evaluated at $\tilde\Theta^{MPS}$. Thus, using the asymptotic properties of MPS, the ACIs for $\Psi_{s,v}$ are also constructed using this method.

\subsection{Bayesian inference}\label{bayesian}
In this section, Bayes estimates of parameters $\eta_1,\eta_2$, and $\varpi$ are obtained under different loss functions. Although the Bayes estimator is defined through the posterior distribution, its estimate depends on the considered loss function. The squared error (SE) loss treats overestimation and underestimation equally. In contrast, the LINEX loss accounts for asymmetric consequences. The SE is one of the popular loss functions defined as:
\begin{equation}
    L_{1}(\delta, \tilde{\delta})= (\delta - \tilde{\delta})^2.
\end{equation}
The Bayes estimate of $\zeta(\delta)$ of $\delta$ under this loss function is given by
\begin{equation}
    \hat{\zeta}_{SB}(\delta) = E[\zeta(\delta \mid \text{data})] =  \int_{0}^{\infty} \zeta(\delta) \pi(\delta \mid \text{data}) ~ d\delta.
\end{equation}
This loss function imposes an equal penalty for both overestimation and underestimation. In some cases, asymmetric loss functions have better applications. We consider the LINEX loss function, which is quite widely used in Bayesian inference. It is asymmetric in nature and defined as:
 \begin{equation}
     L_2(\delta, \tilde{\delta}) = e^{h(\tilde{\delta} - \delta)} -h(\tilde{\delta}-\delta) -1
 \end{equation}
where $h\ne 0$ controls the degree of asymmetry. The required estimate of $\zeta(\delta)$ is now given as
\begin{equation}
    \hat{\zeta}_{LB}(\delta) = -\frac{1}{h}\ln\left[E(e^{-h\zeta(\delta)})| \text{data}\right] = -\frac{1}{h}\ln\left[\int_0^\infty e^{-h\zeta(\delta)}\pi(\delta | \text{data})d\delta\right].
\end{equation}

The LINEX loss function accounts for asymmetric penalties for over- and under-estimation. It is well known for its efficacy in estimating parameters within the Bayesian framework. Employing these loss functions, we obtain Bayesian estimators of model parameters and evaluate their performance through simulations. Our analysis characterizes the behavior of estimators across various prior specifications, sample sizes, and posterior distributions. The findings reveal trade-offs between bias, accuracy, and robustness in relation to prior assumptions.

\subsection{Prior information}
\subsubsection{Informative prior}
The choice of prior distributions plays a crucial role in Bayesian estimation. A flexible family of prior distributions is often considered to strike a balance between analytical tractability and practical flexibility. We assume that parameters $\eta_1, \eta_2$, and $\varpi$ have independent gamma prior distributions.
The corresponding PDF of $\eta_1, \eta_2$, and $\varpi$ are given by:
\begin{equation*}
    \pi_1(\eta_1) = \frac{b_1^{a_1}}{\Gamma (a_1)} \eta_1^{a_1-1}e^{-b_1\eta_1};\quad \eta_1>0,a_1>0,b_1>0,
\end{equation*}
\begin{equation*}
    \pi_2(\eta_2) = \frac{b_2^{a_2}}{\Gamma (a_2)} \eta_2^{a_2-1}e^{-b_2\eta_2};\quad \eta_2>0,a_2>0,b_2>0,
\end{equation*}
and
\begin{equation*}
    \pi_3(\varpi) = \frac{b_3^{a_3}}{\Gamma (a_3)} \varpi^{a_3-1}e^{-b_3\varpi};\quad \varpi>0,a_3>0,b_3>0
\end{equation*}
respectively. Then the joint prior density becomes
\begin{equation}
    \pi(\eta_1,\eta_2,\varpi) \propto \eta_1^{a_1-1}\eta_2^{a_2-1}\varpi^{a_3-1}e^{-b_1\eta_1-b_2\eta_2-b_3\varpi}.
\end{equation}

As a result, we get two distinct posterior densities. Using likelihood function, the joint posterior density (JPDF) of $\eta,c$, and $d$ becomes
\begin{equation}\label{mle-post-inf}
    \begin{aligned}
    \Pi_1(\eta_1,\eta_2,\varpi\mid data)  \propto & \pi(\eta_1,\eta_2,\varpi\mid data)\times \text{Likelihood} \\
    = &  K_1^{-1} \eta_1^{nv+a_1-1}\eta_2^{n+a_2-1}\varpi^{2m(v+1)+a_3-1}\\
    & \times \exp\left[-b_1\eta_1-b_2\eta_2-b_3\varpi -\varpi^2\left\{\sum_{i=1}^n\sum_{j=1}^v\left(\ln x_{ij}\right)^2 + \sum_{i=1}^n\left(\ln y_i\right)^2\right\}\right]\\
    & \times \prod_{i=1}^n\prod_{j=1}^v\left(1-e^{-(\varpi\ln x_{ij})^2}\right)^{\left(\eta_1(1+R_j)-1 \right)} \prod_{i=1}^n\left(1-e^{-(\varpi\ln y_i)^2}\right)^{\left(\eta_2(1+S_i)-1 \right)},
\end{aligned}
\end{equation}
where
\begin{align*}
    K_1^{-1}= &\int_{0}^{\infty}\int_{0}^{\infty}\int_{0}^{\infty}  \eta_1^{nv+a_1-1}\eta_2^{n+a_2-1}\varpi^{2n(v+1)+a_3-1}\\
    & \times \exp\left[-b_1\eta_1-b_2\eta_2-b_3\varpi -\varpi^2\left\{\sum_{i=1}^n\sum_{j=1}^v\left(\ln x_{ij}\right)^2 + \sum_{i=1}^n\left(\ln y_i\right)^2\right\}\right]\\
    & \times \prod_{i=1}^n\prod_{j=1}^v\left(1-e^{-(\varpi\ln x_{ij})^2}\right)^{\left(\eta_1(1+R_j)-1 \right)} \prod_{i=1}^n\left(1-e^{-(\varpi\ln y_i)^2}\right)^{\left(\eta_2(1+S_i)-1 \right)} d\eta_1d\eta_2d\varpi.
\end{align*}
Further JPDF using spacing function is given by
\begin{equation}\label{mps-post-inf}
    \begin{aligned}
    \Pi_2(\eta_1,\eta_2,\varpi\mid data)  \propto & \pi(\eta_1,\eta_2,\varpi\mid data)\times \text{Spacing function} \\
    = &  K_2^{-1} \pi(\eta_1,\eta_2,\varpi\mid data)\prod_{i=1}^n\left(F(x_{i1})[1-F(x_{iv})]\prod_{j=2}^v[F(x_{ij})-F(x_{ij-1})]\right)\\
    & \times \prod_{i=1}^n\prod_{j=2}^v[1-F(x_{ij})]^{R_j} \times F(y_{1})[1-F(y_{n})]\prod_{i=2}^n[F(y_{i})-F(y_{i-1})]\\
    & \times \prod_{i=1}^n[1-F(y_{i})]^{S_i},
\end{aligned}
\end{equation}
where
\begin{align*}
    K_2^{-1}= \int_{0}^{\infty}\int_{0}^{\infty}\int_{0}^{\infty} & \pi(\eta_1,\eta_2,\varpi\mid data)\prod_{i=1}^n\left(F(x_{i1})[1-F(x_{iv})]\prod_{j=2}^v[F(x_{ij})-F(x_{ij-1})]\right)\\
    & \times \prod_{i=1}^n\prod_{j=2}^v[1-F(x_{ij})]^{R_j} \times F(y_{1})[1-F(y_{n})]\prod_{i=2}^n[F(y_{i})-F(y_{i-1})]\\
    & \times \prod_{i=1}^n[1-F(y_{i})]^{S_i} d\eta_1d\eta_2d\varpi.
\end{align*}
Due to the intractability of posterior distributions, closed-form Bayes estimators of $\eta_1,\eta_2$, and $\varpi$ under the considered loss functions are not obtained. Consequently, posterior samples are generated using a Markov chain Monte Carlo (MCMC) method. Specifically, the Metropolis–Hastings algorithm is employed to approximate posterior expectations. Accordingly, Bayes estimates under different loss functions are computed. MCMC methods are extensively used to generate samples from complex probability distributions when direct sampling is infeasible. In this study, posterior samples for the MLE and MPS methods are generated using the Metropolis–Hastings algorithm (see \citet{metropolis} and \citet{hastings}). The procedure outlined in Algorithm~\ref{Alg-1} is employed to obtain samples from the marginal posterior densities.

\subsubsection{Non informative prior}
In Bayesian analysis, we use a non-informative prior when proper information about parameters is available. The Jeffreys prior is quite useful in such a situation because it is built directly from the Fisher information and remains unchanged under reparameterization, see \citet{Jeffreys}. We use it for non-informative cases to ensure that our Bayesian estimates remain objective, driven mainly by the data, and free from unnecessary assumptions. This prior for unknown parameters $(\eta_1, \eta_2, \varpi)$ is derived based on the exact FIM $\mathcal{I}_{E}(\Theta)$. Thus we have
\begin{equation}
    \pi_J(\eta_1,\eta_2,\varpi)=\sqrt{\det(\mathcal{I}_E(\Theta))}.
\end{equation}
One may refer to \citet{santis01} in this regard. However, due to the complex form of the exact information matrix, the Jeffreys prior cannot be derived analytically. So, we consider a possible simplification to have a non-informative prior of the type (see \citet{Moala})
$$\pi_J(\eta_1,\eta_2,\varpi)=\pi(\varpi\mid\eta_1,\eta_2)\pi(\eta_1)\pi(\eta_2).$$ Using the Jeffreys rule, we have  $$\pi(\varpi\mid\eta_1,\eta_2) \propto\sqrt{-E\left(\frac{\partial^2 \ell_1}{\partial\varpi^2}\right)}.$$

Now, for $\pi(\eta_1)$ and $\pi(\eta_2)$, we choose a non-informative prior, for instance, a gamma prior with hyperparameters equal to $10^{-5}$. In this way, the non-informative Jeffreys prior for $(\eta_1,\eta_2,\varpi)$ is given by 
\begin{equation}\label{non-prior}
    \pi_J(\eta_1,\eta_2,\varpi)=\frac{1}{\varpi}\pi(\eta_1)\pi(\eta_2).
\end{equation}
Further the JPDF of $\eta_1,\eta_2$, and $\varpi$ turns out to be
\begin{equation}\label{mle-post-non}
    \begin{aligned}
    \Pi_{3}(\eta_1,\eta_2,\varpi\mid data)  \propto & \pi_J(\eta_1,\eta_2,\varpi)\times \text{Likelihood} \\
    = &  K_{3}^{-1} \pi_J(\eta_1,\eta_2,\varpi)\eta_1^{nv}\eta_2^{n}\varpi^{2n(v+1)}\\
    & \times \exp\left[-b_1\eta_1-b_2\eta_2-b_3\varpi -\varpi^2\left\{\sum_{i=1}^n\sum_{j=1}^v\left(\ln x_{ij}\right)^2 + \sum_{i=1}^n\left(\ln y_i\right)^2\right\}\right]\\
    & \times \prod_{i=1}^n\prod_{j=1}^v\left(1-e^{-(\varpi\ln x_{ij})^2}\right)^{\left(\eta_1(1+R_j)-1 \right)} \prod_{i=1}^n\left(1-e^{-(\varpi\ln y_i)^2}\right)^{\left(\eta_2(1+S_i)-1 \right)},
\end{aligned}
\end{equation}
where
\begin{align*}
    K_{3}^{-1}= &\int_{0}^{\infty}\int_{0}^{\infty}\int_{0}^{\infty}  \pi_J(\eta_1,\eta_2,\varpi)\eta_1^{nv}\eta_2^{n}\varpi^{2n(v+1)}\\
    & \times \exp\left[-b_1\eta_1-b_2\eta_2-b_3\varpi -\varpi^2\left\{\sum_{i=1}^n\sum_{j=1}^v\left(\ln x_{ij}\right)^2 + \sum_{i=1}^n\left(\ln y_i\right)^2\right\}\right]\\
    & \times \prod_{i=1}^n\prod_{j=1}^v\left(1-e^{-(\varpi\ln x_{ij})^2}\right)^{\left(\eta_1(1+R_j)-1 \right)} \prod_{i=1}^n\left(1-e^{-(\varpi\ln y_i)^2}\right)^{\left(\eta_2(1+S_i)-1 \right)} d\eta_1d\eta_2d\varpi.
\end{align*}
In a similar way, the JPDF using spacing function is derived as
\begin{equation}\label{mps-post-non}
    \begin{aligned}
    \Pi_{4}(\eta_1,\eta_2,\varpi\mid data)  \propto & \pi_J(\eta_1,\eta_2,\varpi\mid data)\times \text{Spacing function} \\
    = &  K_{4}^{-1} \pi_J(\eta_1,\eta_2,\varpi\mid data)\prod_{i=1}^n\left(F(x_{i1})[1-F(x_{iv})]\prod_{j=2}^v[F(x_{ij})-F(x_{ij-1})]\right)\\
    & \times \prod_{i=1}^n\prod_{j=2}^v[1-F(x_{ij})]^{R_j} \times F(y_{1})[1-F(y_{n})]\prod_{i=2}^n[F(y_{i})-F(y_{i-1})]\\
    & \times \prod_{i=1}^n[1-F(y_{i})]^{S_i},
\end{aligned}
\end{equation}
where
\begin{align*}
    K_{4}^{-1}= \int_{0}^{\infty}\int_{0}^{\infty}\int_{0}^{\infty} & \pi_J(\eta_1,\eta_2,\varpi\mid data)\prod_{i=1}^n\left(F(x_{i1})[1-F(x_{iv})]\prod_{j=2}^v[F(x_{ij})-F(x_{ij-1})]\right)\\
    & \times \prod_{i=1}^n\prod_{j=2}^v[1-F(x_{ij})]^{R_j} \times F(y_{1})[1-F(y_{n})]\prod_{i=2}^n[F(y_{i})-F(y_{i-1})]\\
    & \times \prod_{i=1}^n[1-F(y_{i})]^{S_i} d\eta_1d\eta_2d\varpi.
\end{align*}
Algorithm \ref{Alg-1} is applied to generate samples from the marginal posterior density \eqref{mle-post-non} and \eqref{mps-post-non} to obtain the required Bayes estimates. 

\begin{algorithm}[H]
\caption{M-H algorithm}
\label{Alg-1} 
\begin{description}
    \item [Step 1] Initialize $\Theta^0=(\eta_1^0, \eta_2^0,\varpi^0)$ and consider a multivariate normal distribution as the proposal density.
    \item [Step 2] Generate candidate parameters $\Theta^{*} = (\eta_1^*, \eta_2^*,\varpi^*)$ from the proposal density.
    \item [Step 3] Set $t = 1$.
    \item [Step 4] Compute the acceptance probability $h(\Theta^*, \Theta^{(t-1)})$ as: $$h(\Theta^*, \Theta^{(t-1)}) = \min\left(1, \frac{\Pi(\Theta^*\mid data)}{\Pi(\Theta^{(t-1)}\mid data)} \right) $$ where, \( \Pi(\Theta \mid data) \) is the joint posterior distribution.
    \item [Step 5] Generate a random number $u$ from uniform $U(0,1)$. With the probability $h$ (computed in Step 4), if $u \leq h$ accept \( \Theta^* \) as the new sample \( \Theta^{(t)} \) otherwise, keep \( \Theta^{(t-1)} \) as \( \Theta^{(t)} \).
    \item [Step 6] Compute $\Psi_{s,v}^{(t)}$ at $\Theta^{(t)}$.
    \item [Step 7] $t=t+1$.
    \item [Step 8] Repeat steps 3-7 $D$ times and obtain samples $\{\Psi_{s,v}^{(t)};~t =1, 2,3, \dots, D\}$.
    \item [Step 6] Discarding the initial $D_0$ samples, estimates of  $\Psi_{s,v}$ is obtained under different loss functions.  Accordingly, the Bayes estimates $\hat\Psi_{s,v}^{MH}$ of $\Psi_{s,v}$,  associated with the squared error and LINEX loss functions,  are respectively expressed as:

    \begin{equation*}
     \begin{aligned}
         \hat\Psi_{s,v}^{MH_1}  = & \frac{1}{D-D_0} \sum_{t=D_0+1}^{D} \Psi_{s,v}^{(t)} \\
        \hat\Psi_{s,v}^{MH_2}  = & -\frac{1}{h} \log \left(\frac{1}{D-D_0} \sum_{t=D_0+1}^{D} e^{-h \Psi_{s,v}^{(t)}}\right).
     \end{aligned}
    \end{equation*}
    The HPD credible interval for a  $\Psi_{s,v}$ is constructed by applying the procedure described in \citet{chen}.
  \end{description} 
\end{algorithm}

\section{Simulation study}\label{simulation study}
Extensive simulation studies are carried out using generated failure-time data to assess the performance of the proposed methods. The point and interval estimators are examined under various CSs. Estimation accuracy is evaluated using the average estimate (AE) and mean squared error (MSE), while interval estimators are compared in terms of their average length (AL) and coverage probability (CP).

Bayes estimators are computed under both informative and non-informative prior settings. For the informative case, the hyperparameters are chosen so that the prior means coincide with the true values of $\eta_1,\eta_2$, and $\varpi$. The results are obtained for different parameter combinations $(\eta_1,\eta_2,\varpi)=(1.9,2.1,3)$ and $(2.25,2.75,2)$. The hyperparameters are specified as $(b_1,b_2,b_3)=(1.5,2,3)$, and $(a_1,a_2,a_3)$. Posterior samples of $\eta_1,\eta_2$, and $\varpi$ are then generated using the Metropolis–Hastings algorithm based on the posterior distributions \eqref{mle-post-inf}, \eqref{mps-post-inf}, \eqref{mle-post-non}, and \eqref{mps-post-non}. Bayes estimates of $\Psi_{s,v}$ are subsequently obtained from $D=10{,}000$ iterations, with the first $D_0=2{,}000$ samples discarded as burn-in.
The proposed estimators are examined over various combinations of $N,n,V$, and $v$, with $s=2$, $s=3$, corresponding to the system reliability measures $\Psi_{2,v}$ and $\Psi_{3,v}$. All interval estimators are constructed at a nominal confidence level of $95\%$, and each simulation experiment is replicated $1{,}000$ times to ensure numerical stability and reliable inference. Summary measures for both point and interval estimation are reported in the corresponding tables.

Additionally, sample generation from the UGR distribution under a multicomponent system is carried out using the procedure described in Algorithm~\ref{Alg-2}. 
\begin{algorithm}[H]
\caption{Generation of PT2 censored data for UGR distribution}
\label{Alg-2} 
\begin{description}
    \item [Step 1] Start with the true value of parameters. We consider the following CSs along with two sets of censoring times:
\begin{align*}
    \text{Scheme 1:} \quad &  R_{j}=
	\begin{cases}
		V-v, & j =1\\
		0, & \text{otherwise}
	\end{cases} \quad \text{and} \quad
    S_{i}=
	\begin{cases}
		N-n, & i =1\\
		0, & \text{otherwise},
	\end{cases}
	\\
	\text{Scheme 2:} \quad &  R_{j}=
	\begin{cases}
		V-v, & j=v\\
		0, & \text{otherwise}
	\end{cases} \quad \text{and} \quad
    S_{i}=
	\begin{cases}
		N-n, & i =n\\
		0, & \text{otherwise}.
	\end{cases}
\end{align*}
    \item [Step 2] Using the algorithm proposed by \citet{Balakrishnan algo}, $v$ independent PT2 censored samples are generated from the Uniform $((0,1)$ distribution and denoted by $U_{1}, U_{2}, \ldots, U_{v}$.

    \item [Step 3] Obtain PT2 censored data from the strength variable in a single component system $x_{1}, x_{2}, \dots, x_{v}$ using the inverse transformation from the CDF (\ref{cdf}) by:
    \begin{equation*}
            x_{j} = \exp\left\{-\frac{1}{\varpi}\sqrt{-\ln{(1-(1-u_j)^{1/\eta_1})}}\right\}  ;\quad j=1,2, \dots,v.
    \end{equation*} 
     PT2 censored multicomponent data are constructed by embedding an $m$-out-of-$M$ system within a $v$-out-of-$v$ component structure.  Correspondingly, PT2 censored observations are obtained for the stress variable.

  \end{description}
\end{algorithm}
The numerical findings reported in Tables~\ref{sim point1}--\ref{sim length2} highlight several consistent features across different estimation methods. Better MSE performance of MLE, MPS, and Bayesian estimators is observed as the sample size increases, supporting large-sample validation. For fixed sample sizes, Bayesian estimators generally outperform MLE and MPS estimates in terms of MSEs, indicating higher estimation efficiency. Furthermore, larger samples are associated with shorter HPD and ACI intervals and improved coverage probabilities, demonstrating enhanced precision and robustness of statistical inference.

A comparison of MLE and MPS estimators shows that their performances are largely equivalent, with no method consistently dominating the other. In certain scenarios, MPS yields lower MSEs with shorter interval lengths and improved numerical stability. These findings suggest that MPS offers a competitive and reliable alternative to MLE, although it does not consistently outperform for all tabulated schemes.

Overall, the accuracy of both point and interval estimators improves with increasing sample sizes, with Bayesian methods performing most effectively in the majority of schemes. Additionally, the MPS approach is demonstrated to be a credible and competitive alternative to the conventional MLE method.

\begin{sidewaystable}
\renewcommand{\arraystretch}{1.2}
    \begin{table}[H]
    \smaller
    \centering
    \caption{Point estimates and corresponding MSE for $\eta_1=1.9,\eta_2=2.1,\varpi=3$.}
    \vspace{0.1cm}
    \label{sim point1} 
    \begin{tabular}{cccccccccccccccc}
    \hline 
    && \multicolumn{2}{c}{Classical} & \multicolumn{6}{c}{Informative} & \multicolumn{6}{c}{Non-informative}\\
    \cmidrule{3-4} \cmidrule(l){5-10} \cmidrule(l){11-16}
    && MLE & MPS & \multicolumn{3}{c}{MLE Based} & \multicolumn{3}{c}{MPS Based} & \multicolumn{3}{c}{MLE Based} & \multicolumn{3}{c}{MPS Based}\\
    \cmidrule(l){5-7} \cmidrule(l){8-10} \cmidrule(l){11-13} \cmidrule(l){14-16}
        $(N,n,V,v,s)$ & CS & & & SE & h= 0.5 & h= -0.5 & SE & h= 0.5 & h= -0.5 & SE & h= 0.5 & h= -0.5 & SE & h= 0.5 & h= -0.5\\
        \cmidrule(l){1-16}
         (15,10,10,5,2) & 1 & 0.71164 & 0.72238 & 0.68879 & 0.68677 & 0.69079 & 0.73546 & 0.73360 & 0.73730 & 0.69190 & 0.68934 & 0.69441 & 0.71108 & 0.70854 & 0.71358\\
         & & 0.00787 & 0.01055 & 0.00608 & 0.00616 & 0.00600 & 0.00584 & 0.00574 & 0.00593 & 0.01027 & 0.01039 & 0.01016 & 0.00657 & 0.00658 & 0.00658\\
         & 2 & 0.71542 & 0.70164 & 0.68516 & 0.68308 & 0.68722 & 0.72932 & 0.72745 & 0.73117 & 0.68845 & 0.68587 & 0.69099 & 0.70874 & 0.70622 & 0.71122\\
         & & 0.00752 & 0.01402 & 0.00665 & 0.00675 & 0.00655 & 0.00564 &0.00557 & 0.00571 & 0.01096 & 0.01111 & 0.01083 & 0.00650 & 0.00652 & 0.00650\\
         (15,10,10,5,3) & 1 & 0.54772 & 0.56082 & 0.52889 & 0.52686 & 0.53091 & 0.57695 & 0.57487 & 0.57903 & 0.53579 & 0.53320 & 0.53837 & 0.55312 & 0.55044 & 0.55579\\
         & & 0.00909 & 0.01277 & 0.00647 & 0.00647 & 0.00648 & 0.00753 & 0.00735 & 0.00772 & 0.01146 & 0.01143 & 0.01151 & 0.00795 & 0.00784 & 0.00808\\
         & 2 & 0.55315 & 0.54294 & 0.52744 & 0.52536 & 0.52951 & 0.57273 & 0.57065 & 0.57480 & 0.53396 & 0.53134 & 0.53657 & 0.55278 & 0.55013 & 0.55543\\
         & & 0.00806 & 0.01467 & 0.00622 & 0.00622 & 0.00622 & 0.00674 & 0.00657 & 0.00691 & 0.01107 & 0.01104 & 0.01111 & 0.00720 & 0.00708 & 0.00732\\
         (20,13,12,7,2) & 1 & 0.78612 & 0.79867 & 0.77105 & 0.76963 & 0.77244 & 0.80197 & 0.80070 & 0.80322 & 0.77203 & 0.77028 & 0.77375 & 0.78168 & 0.77995 & 0.78338\\
         & & 0.00517 & 0.00652 & 0.00433 & 0.00439 & 0.00427 & 0.00356 & 0.00354 & 0.00358 & 0.00666 & 0.00676 & 0.00657 & 0.00445 & 0.00450 & 0.00440\\
         & 2 & 0.78324 & 0.78828 & 0.76753 & 0.76610 & 0.76895 & 0.79544 & 0.79415 & 0.79670 & 0.76773 & 0.76596 & 0.76948 & 0.77619 & 0.77446 & 0.77789\\
         & & 0.00470 & 0.00771 & 0.00454 & 0.00462 & 0.00447 & 0.00341 & 0.00341 & 0.00342 & 0.00688 & 0.00699 & 0.00677 & 0.00438 & 0.00445 & 0.00432\\
         (20,13,12,7,3) & 1 & 0.66396 & 0.67800 & 0.65092 & 0.64918 & 0.65265 & 0.68641 & 0.68471 & 0.68809 & 0.65286 & 0.65072 & 0.65499 & 0.66389 & 0.66172 & 0.66605\\
         & & 0.00706 & 0.00918 & 0.00566 & 0.00570 & 0.00562 & 0.00513 & 0.00506 & 0.00521 & 0.00889 & 0.00895 & 0.00883 & 0.00607 & 0.00608 & 0.00607\\
         & 2 & 0.66396 & 0.67109 & 0.64994 & 0.64819 & 0.65169 & 0.68176 & 0.68009 & 0.68341 & 0.65239 & 0.65024 & 0.65452 & 0.66088 & 0.65874 & 0.66299\\
         & & 0.00630 & 0.01057 & 0.00570 & 0.00574 & 0.00565 & 0.00492 & 0.00487 & 0.00498 & 0.00877 & 0.00883 & 0.00872 & 0.00585 & 0.00587 & 0.00585\\
        \cmidrule(l){1-16}
    \end{tabular}
\end{table}
\end{sidewaystable}
\begin{sidewaystable}
\renewcommand{\arraystretch}{1.2}
     \begin{table}[H]
     \smaller
    \centering
    \caption{Interval estimates for $\eta_1=1.9,\eta_2=2.1,\varpi=3$.}
    \vspace{0.1cm}
    \label{sim length1}
    \begin{tabular}{cccccccccccccc}
    \toprule 
    && \multicolumn{6}{c}{Avg. length} & \multicolumn{6}{c}{CP}\\
    \cmidrule(l){3-8} \cmidrule(l){9-14}
    && \multicolumn{2}{c}{Classical} & \multicolumn{2}{c}{Informative} & \multicolumn{2}{c}{Non-informative}& \multicolumn{2}{c}{Classical} & \multicolumn{2}{c}{Informative} & \multicolumn{2}{c}{Non-informative}\\
    \cmidrule(l){3-4} \cmidrule(l){5-6} \cmidrule(l){7-8} \cmidrule(l){9-10} \cmidrule(l){11-12} \cmidrule(l){13-14}
        $(N,n,V,v,s)$ & CS & MLE & MPS & MLE & MPS & MLE & MPS & MLE & MPS & MLE & MPS & MLE & MPS\\ \cmidrule(l){1-14}
        (15,10,10,5,2) & 1 & 0.39916 & 0.25735 & 0.33830 & 0.32168 & 0.37416 & 0.37346 & 0.94600 & 0.75300 & 0.96700 & 0.94700 & 0.92000 & 0.96500\\
        & 2 & 0.39674 & 0.45156 & 0.34353 & 0.32111 & 0.37606 & 0.37013 & 0.94000 & 0.89400 & 0.96000 & 0.94100 & 0.91200 & 0.95500\\
        (15,10,10,5,3) & 1 & 0.41533 & 0.27295 & 0.34547 & 0.35001 & 0.38717 & 0.39477 & 0.95800 & 0.76600 & 0.96400 & 0.95300 & 0.92000 & 0.96100\\
        & 2 & 0.41595 & 0.46684 & 0.34984 & 0.34668 & 0.38911 & 0.39080 & 0.97400 & 0.92300 & 0.97100 & 0.96900 & 0.92900 & 0.97500\\
        (20,13,12,7,2) & 1 & 0.32088 & 0.20013 & 0.27876 & 0.26223 & 0.30506 & 0.30401 & 0.94800 & 0.74400 & 0.95900 & 0.94800 & 0.92900 & 0.96400\\
        & 2 & 0.32320 & 0.36323 & 0.28174 & 0.26363 & 0.30796 & 0.30333 & 0.95300 & 0.91600 & 0.95800 & 0.95700 & 0.92800 & 0.96600\\
        (20,13,12,7,3) & 1 & 0.36916 & 0.23438 & 0.31803 & 0.31253 & 0.34893 & 0.35291 & 0.95100 & 0.76200 & 0.95500 & 0.96300 & 0.92000 & 0.96000\\
        & 2 & 0.36959 & 0.42180 & 0.31942 & 0.30907 & 0.35000 & 0.34802 & 0.96800 & 0.93200 & 0.96500 & 0.96800 & 0.93400 & 0.97200\\
        \cmidrule(l){1-14}
    \end{tabular}
\end{table} 
\end{sidewaystable}     
\begin{sidewaystable}
\renewcommand{\arraystretch}{1.2}
    \begin{table}[H]
    \smaller
    \centering
    \caption{Point estimates and corresponding MSE for $\eta_1=2.25,\eta_2=2.75,\varpi=2$.}
    \vspace{0.1cm}
    \label{sim point2} 
    \begin{tabular}{cccccccccccccccc}
    \hline 
    && \multicolumn{2}{c}{Classical} & \multicolumn{6}{c}{Informative} & \multicolumn{6}{c}{Non-informative}\\
    \cmidrule{3-4} \cmidrule(l){5-10} \cmidrule(l){11-16}
    && MLE & MPS & \multicolumn{3}{c}{MLE Based} & \multicolumn{3}{c}{MPS Based} & \multicolumn{3}{c}{MLE Based} & \multicolumn{3}{c}{MPS Based}\\
    \cmidrule(l){5-7} \cmidrule(l){8-10} \cmidrule(l){11-13} \cmidrule(l){14-16}
        $(N,n,V,v,s)$ & CS & & & SE & h= 0.5 & h= -0.5 & SE & h= 0.5 & h= -0.5 & SE & h= 0.5 & h= -0.5 & SE & h= 0.5 & h= -0.5\\
        \cmidrule(l){1-16}
         (15,10,10,5,2) & 1 & 0.73070 & 0.74501 & 0.71412 & 0.71231 & 0.71592 & 0.76159 & 0.75998 & 0.76318 & 0.71406 & 0.71159 & 0.71650 & 0.72795 & 0.72550 & 0.73036\\
         & & 0.00634 & 0.00901 & 0.00504 & 0.00513 & 0.00496 & 0.00467 & 0.00460 & 0.00474 & 0.00941 & 0.00957 & 0.00925 & 0.00576 & 0.00584 & 0.00570\\
         & 2 & 0.74029 & 0.73359 & 0.71513 & 0.71327 & 0.71696 & 0.75921 & 0.75758 & 0.76081 & 0.71637 & 0.71392 & 0.71879 & 0.73079 & 0.72837 & 0.73317\\
         & & 0.00609 & 0.01272 & 0.00560 & 0.00569 & 0.00551 & 0.00490 & 0.00485 & 0.00496 & 0.01003 & 0.01020 & 0.00987 & 0.00584 & 0.00591 & 0.00579\\
         (15,10,10,5,3) & 1 & 0.57177 & 0.58914 & 0.55777 & 0.55581 & 0.55971 & 0.60917 & 0.60722 & 0.61111 & 0.56352 & 0.56088 & 0.56613 & 0.57605 & 0.57335 & 0.57872\\
         & & 0.00798 & 0.01179 & 0.00565 & 0.00567 & 0.00563 & 0.00691 & 0.00674 & 0.00708 & 0.01102 & 0.01103 & 0.01103 & 0.00731 & 0.00725 & 0.00738\\
         & 2 & 0.57703 & 0.57482 & 0.55635 & 0.55436 & 0.55834 & 0.60287 & 0.60093 & 0.60480 & 0.56206 & 0.55942 & 0.56468 & 0.57447 & 0.57178 & 0.57713\\
         & & 0.00719 & 0.01498 & 0.00585 & 0.00587 & 0.00583 & 0.00642 & 0.00628 & 0.00657 & 0.01113 & 0.01113 & 0.01113 & 0.00682 & 0.00676 & 0.00689\\
         (20,13,12,7,2) & 1 & 0.80611 & 0.82084 & 0.79602 & 0.79480 & 0.79723 & 0.82775 & 0.82669 & 0.82880 & 0.79489 & 0.79328 & 0.79648 & 0.80100 & 0.79939 & 0.80258\\
         & & 0.00436 & 0.00555 & 0.00363 & 0.00369 & 0.00356 & 0.00287 & 0.00286 & 0.00288 & 0.00617 & 0.00629 & 0.00606 & 0.00408 & 0.00416 & 0.00401\\
         & 2 & 0.80542 & 0.81721 & 0.79529 & 0.79406 & 0.79650 & 0.82320 & 0.82213 & 0.82425 & 0.79540 & 0.79379 & 0.79698 & 0.79900 & 0.79742 & 0.80056\\
         & & 0.00403 & 0.00709 & 0.00393 & 0.00400 & 0.00387 & 0.00294 & 0.00294 & 0.00294 & 0.00646 & 0.00658 & 0.00635 & 0.00424 & 0.00433 & 0.00417\\
         (20,13,12,7,3) & 1 & 0.69084 & 0.70947 & 0.68225 & 0.68065 & 0.68384 & 0.71974 & 0.71823 & 0.72124 & 0.68469 & 0.68261 & 0.68675 & 0.68976 & 0.68765 & 0.69184\\
         & & 0.00572 & 0.00786 & 0.00451 & 0.00456 & 0.00447 & 0.00431 & 0.00425 & 0.00438 & 0.00779 & 0.00787 & 0.00772 & 0.00536 & 0.00540 & 0.00532\\
         & 2 & 0.68965 & 0.70647 & 0.68187 & 0.68026 & 0.68347 & 0.71450 & 0.71300 & 0.71598 & 0.68509 & 0.68300 & 0.68715 & 0.68732 & 0.68524 & 0.68937\\
         & & 0.00542 & 0.00995 & 0.00498 & 0.00504 & 0.00494 & 0.00439 & 0.00435 & 0.00444 & 0.00829 & 0.00836 & 0.00822 & 0.00552 & 0.00557 & 0.00548\\
        \cmidrule(l){1-16}
    \end{tabular}
\end{table}
\end{sidewaystable}

\begin{sidewaystable}
\renewcommand{\arraystretch}{1.2}
     \begin{table}[H]
     \smaller
    \centering
    \caption{Interval estimates for $\eta_1=2.25,\eta_2=2.75,\varpi=2$.}
    \vspace{0.1cm}
    \label{sim length2}
    \begin{tabular}{cccccccccccccc}
    \toprule 
    && \multicolumn{6}{c}{Avg. length} & \multicolumn{6}{c}{CP}\\
    \cmidrule(l){3-8} \cmidrule(l){9-14}
    && \multicolumn{2}{c}{Classical} & \multicolumn{2}{c}{Informative} & \multicolumn{2}{c}{Non-informative}& \multicolumn{2}{c}{Classical} & \multicolumn{2}{c}{Informative} & \multicolumn{2}{c}{Non-informative}\\
    \cmidrule(l){3-4} \cmidrule(l){5-6} \cmidrule(l){7-8} \cmidrule(l){9-10} \cmidrule(l){11-12} \cmidrule(l){13-14}
        $(N,n,V,v,s)$ & CS & MLE & MPS & MLE & MPS & MLE & MPS & MLE & MPS & MLE & MPS & MLE & MPS\\ \cmidrule(l){1-14}
        (15,10,10,5,2) & 1 & 0.39253 & 0.25155 & 0.31983 & 0.29815 & 0.36564 & 0.36493 & 0.96500 & 0.78300 & 0.97200 & 0.95400 & 0.92300 & 0.97500\\
        & 2 & 0.38575 & 0.43395 & 0.32348 & 0.29655 & 0.36354 & 0.35925 & 0.95900 & 0.89600 & 0.96300 & 0.94400 & 0.91600 & 0.96200\\
        (15,10,10,5,3) & 1 & 0.41883 & 0.27420 & 0.33883 & 0.33769 & 0.38931 & 0.39482 & 0.97100 & 0.78000 & 0.97900 & 0.94400 & 0.93300 & 0.97600\\
        & 2 & 0.41776 & 0.46908 & 0.34216 & 0.33416 & 0.38924 & 0.39149 & 0.97900 & 0.91000 & 0.97500 & 0.95800 & 0.92500 & 0.97100\\
        (20,13,12,7,2) & 1 & 0.30776 & 0.18928 & 0.25784 & 0.23734 & 0.28895 & 0.29012 & 0.95700 & 0.73800 & 0.96500 & 0.95200 & 0.93100 & 0.97100\\
        & 2 & 0.30800 & 0.33899 & 0.25869 & 0.23722 & 0.28892 & 0.28713 & 0.96300 & 0.88800 & 0.96400 & 0.94300 & 0.91900 & 0.96500\\
        (20,13,12,7,3) & 1 & 0.36449 & 0.22964 & 0.30364 & 0.29390 & 0.34223 & 0.34456 & 0.96600 & 0.78700 & 0.97300 & 0.95500 & 0.92800 & 0.97000\\
        & 2 & 0.36395 & 0.40704 & 0.30472 & 0.29107 & 0.34240 & 0.34121 & 0.96900 & 0.90700 & 0.96400 & 0.95100 & 0.93000 & 0.96500\\
        \cmidrule(l){1-14}
    \end{tabular}
\end{table} 
\end{sidewaystable}

\newpage
\section{Application}\label{application}
We analyze a real data set that is originally reported by \citet{Musa}. \citet{lyu96} also can be referred in this regard. The data are publicly available and can be accessed from the reliability data repository at \url{http://www.cse.cuhk.edu.hk/~lyu/book/reliability/DATA/CH7/SYS2.DAT}. This data set consists of observed failure times of multiple units recorded during a reliability study. In the complete data case, we focus on a 3-out-of-6: G system, corresponding to the choices $s=3$ and $v=6$. This system configuration implies that the system fails when at least three out of six components have failed. To construct the system-level data from the original sequence of failure times, we adopt the following grouping scheme. Let $Y_1$ be the 11th failure time and $X_{1v},v=1,2,\dots,6$ be the failure time observations numbered from 12 to 17. Similarly, $Y_2$ be the 18th failure time while $X_{2v},v=1,2,\dots,6$ denotes the 19th to 24th failure observations. Continuing this process up to the 66th failure, we have obtained data $Y$ with $M=8$. Following the support of the UGR model, we have divided the whole data set by $5000$, and the data are given below:
\begin{equation*}\label{real-data}\underbrace{
    \begin{pmatrix}
        0.2548 & 0.0938 & 0.2348 & 0.1386 & 0.3816 & 0.0270 \\
0.1192 & 0.1514 & 0.0874 & 0.4460 & 0.0874 & 0.0680 \\
0.1070 & 0.0554 & 0.0726 & 0.1044 & 0.1226 & 0.0554 \\
0.1642 & 0.0426 & 0.3240 & 0.3202 & 0.0596 & 0.1748 \\
0.5280 & 0.0010 & 0.0298 & 0.2068 & 0.4882 & 0.0920 \\
0.2238 & 0.0874 & 0.1854 & 0.8924 & 0.1428 & 0.0362 \\
0.1514 & 0.6308 & 0.4230 & 0.1768 & 0.4074 & 0.2962 \\
0.0980 & 0.1186 & 0.3538 & 0.0170 & 0.5672 & 0.0426 \\
    \end{pmatrix} }_{\text{Observed strength data}}
    \quad \text{and} \underbrace{
    \begin{pmatrix}
        0.0234 \\
        0.0554 \\
        0.0810 \\
        0.2600 \\
        0.1236 \\
        0.1130 \\
        0.2970 \\
        0.1118
    \end{pmatrix}}_{\text{Observed stress data}}
\end{equation*}

We first examine whether our model is suitable for analyzing the given dataset. Table 5 presents the maximum likelihood estimates of the unknown parameters for the three competing models —unit generalized Rayleigh, unit Gompertz, and Kumarswamy and unit inverse Weibull—based on both data sets, along with the corresponding Kolmogorov–Smirnov (K–S) statistics and p-values. From these results, it is evident that the unit generalized Rayleigh distribution yields smaller K–S statistics and larger p-values than the alternative models, indicating a better overall fit to the observed data and supporting its suitability for the present analysis. Figure~\ref{fit} presents additional graphical results, where we overlay the empirical cumulative distribution functions with the corresponding theoretical CDFs, along with P–P plots, Q–Q plots, and histograms fitted with PDFs using both MLE and MPS methods. These plots demonstrate a good model fit. To examine parameter convergence in the Bayesian framework, we include trace plots with posterior means (blue dotted lines) and $95\%$ HPD intervals (blue solid lines), as well as posterior histograms and mean convergence plots. Figures~\ref{scheme1} and \ref{scheme2} clearly confirm stable parameter behavior, allowing reliable sampling from the posterior distributions.

We present the reliability estimates obtained from the complete data using the MLE, MPS, and the Bayesian estimates, along with their associated $95\%$ ACIs and HPDs. The Bayesian estimates are computed under noninformative prior distributions to avoid the influence of subjective prior information. In addition, reliability estimates $\Psi_{s,v}$ are derived under PT2 censoring by considering two distinct CSs, allowing us to examine the effect of censoring on the estimation procedure. The corresponding estimates and interval results for these schemes are presented in the Table \ref{real data est}. We consider the following schemes:
\begin{enumerate}[Scheme 1:]
    \item $R = (1,0,0,0,0), S = (2,0,0,0,0,0), (V = 6, v = 5, M = 8, m = 6, s = 3)$
    \item $R = (0,0,0,0,1), S = (0,0,0,0,0,2), (V = 6, v = 5, M = 8, m = 6, s = 3)$
\end{enumerate}

Based on these two CSs, the generated PT2 censored samples are given below:
\begin{equation*}\label{real-data-cen1}
\underbrace{\underbrace{
    \begin{pmatrix}
        0.0874 & 0.1226 & 0.1854 & 0.2548 & 0.5280 \\
0.0298 & 0.0874 & 0.0980 & 0.1428 & 0.2348 \\
0.1070 & 0.1514 & 0.3240 & 0.3816 & 0.4882 \\
0.0170 & 0.0270 & 0.0554 & 0.3202 & 0.6308 \\
0.0596 & 0.0726 & 0.1192 & 0.2238 & 0.4230 \\
0.0426 & 0.0874 & 0.1514 & 0.1748 & 0.1768 \\
    \end{pmatrix} }_{\text{Observed strength data}}
    \quad \text{and} \underbrace{
    \begin{pmatrix}
0.0234\\
0.0810\\
0.1130\\
0.1236\\
0.2600\\
0.2970
    \end{pmatrix}}_{\text{Observed stress data}}}_{\text{Under Scheme 1}}
\end{equation*}
\begin{equation*}\label{real-data-cen2}
\underbrace{\underbrace{
    \begin{pmatrix}
        0.0874 & 0.1226 & 0.1854 & 0.2548 & 0.4074 \\
0.0010 & 0.0554 & 0.0938 & 0.2962 & 0.4460 \\
0.0298 & 0.0874 & 0.0980 & 0.1428 & 0.1642 \\
0.0362 & 0.0426 & 0.0680 & 0.1186 & 0.1386 \\
0.1070 & 0.1514 & 0.3240 & 0.3538 & 0.3816 \\
0.0170 & 0.0270 & 0.0554 & 0.0920 & 0.3202 \\
    \end{pmatrix} }_{\text{Observed strength data}}
    \quad \text{and} \underbrace{
    \begin{pmatrix}
0.0234\\
0.0554\\
0.0810\\
0.1118\\
0.1130\\
0.1236
    \end{pmatrix}}_{\text{Observed stress data}}}_{\text{Under Scheme 2}}
\end{equation*}

\begin{figure}[H]
    \centering
    \includegraphics[width=1\linewidth]{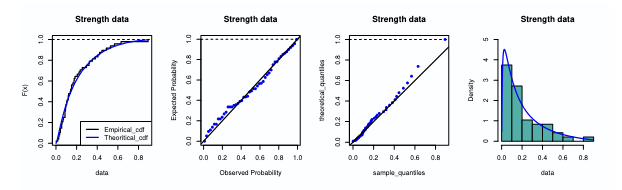}
    \includegraphics[width=1\linewidth]{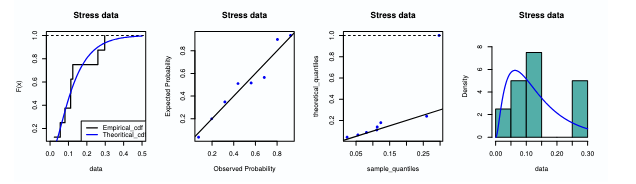}
    \caption{Empirical distribution with theoretical distribution, P-P, Q-Q, and histogram with fitted density plot for real data.}
    \label{fit}
\end{figure}

\begin{table}[H]
    \centering
    \smaller
    \caption{Goodness of fit for the real data.}
    \vspace{0.1cm}
    \label{goodness-of-fit} 
    \begin{tabular}{ccccccccc}
    \hline 
    Distribution & \multicolumn{4}{c}{Strength data} & \multicolumn{4}{c}{Stress data}\\
    \cmidrule(l){2-5} \cmidrule(l){6-9}
    & $\hat{\eta_1}$ & $\hat{\varpi}$ & K-S & p-value & $\hat{\eta_2}$ & $\hat{\varpi}$ & K-S & p-value\\
    \cmidrule{1-9}
    Unit Generalized Rayleigh & 0.9362 & 0.4180 & 0.0858 & 0.8711 & 2.6758 & 0.5499 & 0.1862 & 0.8999\\
     \cmidrule{1-9}
                Unit Gompertz & 0.7378 & 0.3645 & 0.1189 & 0.5057 & 0.0523 & 1.1408 & 0.1823 & 0.9124\\
     \cmidrule{1-9}
                   Kumarswamy & 3.2295 & 0.9199 & 0.1089 & 0.6209 & 16.4038 & 1.4908 & 0.2256 & 0.7326\\
     \cmidrule{1-9}
          Unit Inverse Weibull & 1.1866 & 1.0799 & 0.2149 & 0.0238 & 5.6605 & 2.9644 & 0.2795 & 0.4772\\
      \cmidrule{1-9}
    \end{tabular}
\end{table}

\begin{table}[H]
    \centering
    \smaller
    \caption{Point and interval estimates of $\Psi_{s,v}$ for the real data.}
    \vspace{0.1 cm}
    \label{real data est} 
    \begin{tabular}{cccccccc}
    \hline 
    & & & \multicolumn{3}{c}{Non-informative} & \\
    \cmidrule(l){4-6}
    CS & Method & Classical & SE & h= -0.5 & h= 0.5 & ACI & HPD\\
    \hline
    Complete & MLE & 0.73067 & 0.68137 & 0.67842 & 0.68427 & [0.50498, 0.95636] & [0.47043, 0.87020]\\
             & MPS & 0.70184 & 0.66744 & 0.66472 & 0.67009 & [0.53790, 0.86579] & [0.46315, 0.85101]\\
    \cmidrule{1-8}
    Scheme 1 & MLE & 0.61069 & 0.56025 & 0.55663 & 0.56381 & [0.33381, 0.88757] & [0.31927, 0.77643]\\
             & MPS & 0.59309 & 0.55309 & 0.54950 & 0.55664 & [0.39490, 0.79127] & [0.31278, 0.76335]\\
    \cmidrule{1-8}         
    Scheme 2 & MLE & 0.66932 & 0.61646 & 0.61282 & 0.62003 & [0.40142, 0.93723] & [0.37765, 0.82973]\\
             & MPS & 0.62222 & 0.57068 & 0.56706 & 0.57423 & [0.36881, 0.87563] & [0.33565, 0.78311]\\
    \cmidrule{1-8}
    \end{tabular}
\end{table}

Based on the Table \ref{real data est}, we observe that MLE and MPS methods provide similar estimates of the unknown parametric function. The HPD credible intervals based on the non-informative prior are slightly smaller than the corresponding length of the ACI estimates. Additionally, we can conclude that the MPS method yields a shorter interval length than the MLE method, and this holds true for both classical and Bayesian cases.

\begin{figure}
    \centering
    \includegraphics[width=0.48\linewidth]{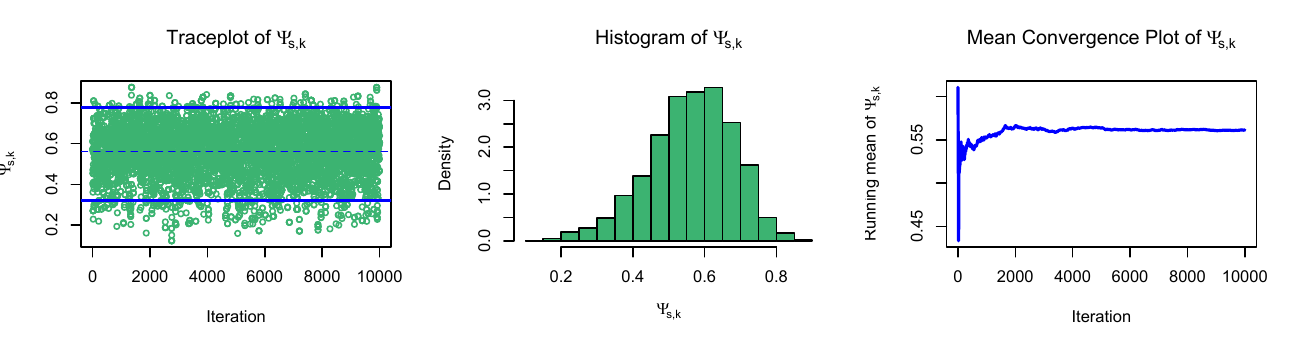}
    \includegraphics[width=0.48\linewidth]{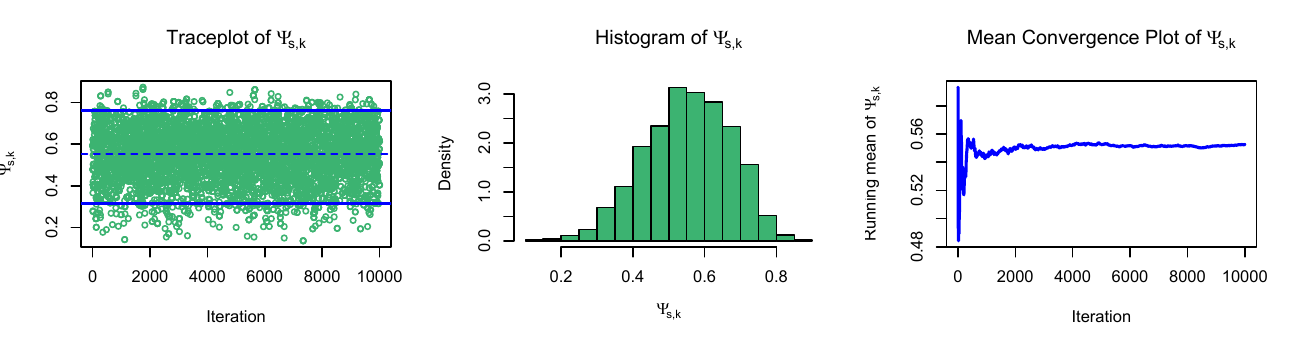}
    \caption{MCMC trace, histogram, and mean convergence plot of real data under Scheme~1 based on MLE (upper panel) and MPS (lower panel).}
    \label{scheme1}
\end{figure}
\begin{figure}
    \centering
    \includegraphics[width=0.48\linewidth]{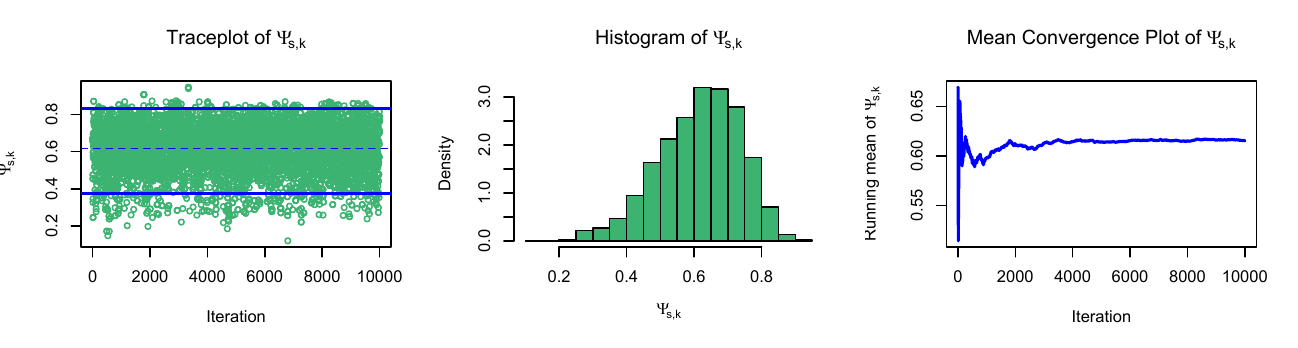}
    \includegraphics[width=0.48\linewidth]{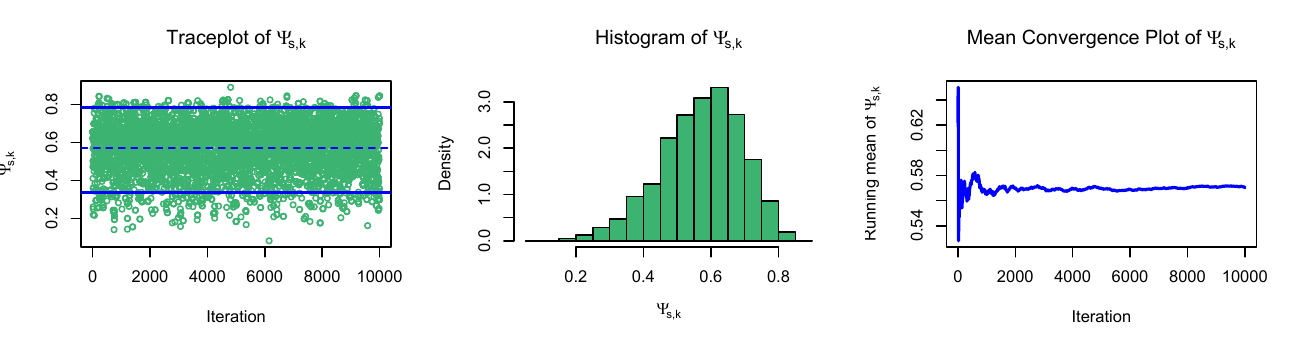}
    \caption{MCMC trace, histogram, and mean plot of real data under Scheme~2 based on MLE (upper panel) and MPS (lower panel).}
    \label{scheme2}
\end{figure}

\section{Optimal censoring plan}\label{optimal cs}
Previous sections focus on parameter estimation under a given PT2 censoring scheme, assuming that item lifetimes follow the UGR distribution and that the CSS $\mathcal{R}$ is known a priori. In practice, however, this assumption is often limiting, 
as the censoring pattern is typically determined by the experimenter. For a fixed sample size $n$ and $m$ observed failures, there exist $\binom{n-1}{m-1}$ possible PT2 censoring 
schemes. This number grows rapidly; for instance, when $n=35$ and $m=25$, there are $\binom{34}{24}=131{,}128{,}140$ admissible schemes, rendering exhaustive evaluation 
infeasible.

This large variety of options naturally leads to the question: which CS should one choose? Since different schemes may yield significantly different levels of information about the unknown parameters, designing an optimal censoring plan becomes a crucial task. Rather than simply taking the first feasible scheme, one can deliberately search for a scheme that performs best according to a specific criterion. Depending on the context, this may involve maximizing the Fisher information, reducing the variance of the estimators, or achieving more accurate predictions for remaining lifetimes. In this way, the CS becomes a design variable that can be optimized, rather than a fixed assumption. The goal of this section is to develop such optimal censoring strategies, providing practical guidelines for planning more informative and efficient life-testing experiments under PT2 censoring. 
For fixed values of $n$ and $m$, we denote the total CS, set as $CS(n,m)$, and all possible schemes can be formally defined as
$$CS(n,m)=\left\{\mathcal{R}=(r_2,r_2,\dots,r_m)\in\mathcal{N}^m \mid \sum_{i=1}^mr_i=n-m,\mathcal{N}=\{0,1,2,\dots,n-m\} \right\}.$$
Even for moderate values of $n$ and $m$, the choice of an appropriate CS constitutes a key element in the design of an efficient life-testing experiment. A judiciously selected scheme may enhance the accuracy of parameter estimation, reduce experimental costs, and reinforce the robustness of resulting inferences. Owing to its practical relevance, the task of determining an optimal censoring design has attracted substantial interest in the literature. Consequently, a variety of criteria and methodological frameworks have been advanced to identify CSs that perform favorably under diverse objectives and operational constraints (see, for instance, \citet{Pradhan1}, \citet{Chandra}, \citet{Alotaibi}, 
\citet{Sen}, among others).

\subsection{Optimality criteria}
This section discusses various optimality criteria for determining the optimal CS that provides the maximum information about the unknown parameters. Three criteria are taken into consideration when performing optimization.  It should be noted that the exact Fisher information matrix, $\mathcal I_E(\Theta)$, which is obtained in Section \ref{EFIM}, serves as the foundation for the objective functions defined here.
 
Our first considered design criterion focuses on the overall precision of the model parameter estimators. This is achieved by using the determinant of the inverse of the Fisher information matrix. Therefore, we define our second objective function $\phi_1(\mathcal{R})$ as $$\phi_1(\mathcal{R}) = \det\left(\mathcal I^{-1}_E(\Theta)\right).$$
Accordingly, the second optimal design $\mathcal{R}^*$ is obtained by
$$\mathcal{R}^* = \arg\min_{\mathcal{R}} , \phi_1(\mathcal{R}).$$

Another optimization criterion explored in this article is the trace of the first-order approximation of the variance-covariance matrix based on the MLEs. The trace corresponds to the sum of the diagonal elements of the inverse of FIM.  This optimality criterion calculates the variance of parameter estimations by summing the eigenvalues of the inverse of the Fisher information matrix.  Consequently, we achieve the optimal design value $\mathcal{R}^*$ by minimizing the objective function $\phi_2(\mathcal{R})=tr\left(\mathcal I^{-1}_E(\Theta)\right)$
as $$\mathcal{R}^*=\arg\min_{\mathcal{R}}\phi_2(\mathcal{R}).$$

Reliability prediction plays a central role in both product design and the testing phase of development. To obtain a dependable estimate of a product’s reliability, we adopt a test design criterion that aims to minimize the asymptotic variance (AV) of the reliability estimate of the multicomponent system. We obtain the asymptotic variance (AV) of the reliability using the delta method. The resulting expression is given by:
\[AV( \Psi_{s,v}) = \nabla^{\top}\mathcal I^{-1}_E(\Theta)\nabla,\]
where elements of $\nabla$ form a column vector, obtained by differentiating the reliability function with respect to each model parameter. In particular, the column vector $\nabla$ is expressed as:
$$\nabla=\left(\frac{\partial\Psi_{s,v}}{\partial\eta_1},\frac{\partial\Psi_{s,v}}{\partial\eta_2},\frac{\partial\Psi_{s,v}}{\partial\varpi} \right)^{\top}.$$
We then define our third design objective function $\phi_3(\mathcal{R})$ as the asymptotic variance of the reliability under design $\mathcal{R}$:
$$\phi_3(\mathcal{R}) = AV( \Psi_{s,v}).$$

Hence, the optimal CS $\mathcal{R}^*$ can be obtained by solving
\[\mathcal{R}^*=\arg \min_{\mathcal{R}} \phi_3(\mathcal{R}).\]

To see how the method works in practice, we set out to find the best CS together with its optimal objective value. We used the parameter settings as $(\eta_1,\eta_2,\varpi)=(1.2,1.5,2)$. For each chosen sample size, we generated all possible schemes and evaluated them using the proposed objective function. The scheme that produced the smallest value is selected as the optimal one. The final set of optimal schemes, along with their corresponding objective values, is displayed in the Table \ref{optimality} for arbitrary $s=3$. These results demonstrate how the preferred scheme evolves as the sample size varies, providing a practical guide for selecting efficient censoring strategies in real-world applications. From Table \ref{optimality}, it is clear that the optimal CS varies with the chosen optimality criterion, indicating that no single scheme is universally optimal given that the systems and the corresponding components for the individual system are provided beforehand. Moreover, the objective values $\phi(\mathcal{R}^*)$ decrease as the sample sizes increase, reflecting improved estimation efficiency. Overall, these results demonstrate that appropriately chosen PT2 censoring schemes can substantially enhance the precision of reliability estimation.

\begin{table}[H]
\centering
\renewcommand{\arraystretch}{1.2} 
\setlength{\tabcolsep}{4pt}    
\caption{Optimal solution and corresponding performance values with $s=3$.}
\label{optimality}
\begin{tabular}{cccc} 
\hline
$(N, n, V, v)$ & Criteria & $\mathcal{R}^*$ & $\phi(\mathcal{R}^*)$ \\ 
\hline
(10, 6, 8, 5) 
& $\phi_1$ & $\mathbf{R} = (0,0,0,2,1),\ \mathbf{S} = (0,0,0,0,0,4)$ & 0.478605 \\
& $\phi_2$ & $\mathbf{R} = (0,0,1,2,0),\ \mathbf{S} = (0,0,0,0,3,1)$ & 0.001001 \\
& $\phi_3$ & $\mathbf{R} = (0,0,3,0,0),\ \mathbf{S} = (1,0,3,0,0,0)$ & 0.019850 \\ 
\hline
(14, 10, 8, 5)
& $\phi_1$ & $\mathbf{R} = (0,0,0,2,1),\ \mathbf{S} = (0,0,0,0,0,0,4,0,0,0)$ & 0.287158 \\
& $\phi_2$ & $\mathbf{R} = (0,0,1,2,0),\ \mathbf{S} = (0,0,0,0,0,3,1,0,0,0)$ & 0.000216 \\
& $\phi_3$ & $\mathbf{R} = (0,0,0,0,3),\ \mathbf{S} = (4,0,0,0,0,0,0,0,0,0)$ & 0.011910 \\ 
\hline
(14, 10, 10, 7)
& $\phi_1$ & $\mathbf{R} = (0,0,0,2,1,0,0),\ \mathbf{S} = (0,0,0,0,0,0,4,0,0,0)$ & 0.270581 \\
& $\phi_2$ & $\mathbf{R} = (0,0,0,2,1,0,0),\ \mathbf{S} = (0,0,0,0,0,3,1,0,0,0)$ & 0.000116 \\
& $\phi_3$ & $\mathbf{R} = (1,0,0,0,1,1,0),\ \mathbf{S} = (4,0,0,0,0,0,0,0,0,0)$ & 0.011636 \\ 
\hline
(18, 14, 10, 7)
& $\phi_1$ & $\mathbf{R} = (0,0,0,2,1,0,0),\ \mathbf{S} = (0,0,0,0,0,0,0,4,0,0,0,0,0,0)$ & 0.193272 \\
& $\phi_2$ & $\mathbf{R} = (0,0,0,2,1,0,0),\ \mathbf{S} = (0,0,0,0,0,0,0,4,0,0,0,0,0,0)$ & 0.000042 \\
& $\phi_3$ & $\mathbf{R} = (0,0,0,0,0,0,3),\ \mathbf{S} = (4,0,0,0,0,0,0,0,0,0,0,0,0,0)$ & 0.008311 \\ 
\hline
\end{tabular}
\end{table}

\section{Conclusion}\label{conclusion}
In this article, we investigate the estimation of multicomponent stress–strength reliability under a PT2 censoring scheme using the unit generalized Rayleigh distribution. Maximum likelihood estimates of the three unknown model parameters, along with the associated reliability measure $\Psi_{s,v}$, are obtained via the EM algorithm. The ACI of the reliability is derived from the FIM using the missing-information principle, and the results demonstrate satisfactory coverage properties even for relatively small sample sizes. As an alternative to maximum likelihood estimation, we also propose the maximum product spacing method and compare its performance with that of the MLE. From a Bayesian perspective, point estimates of the reliability function and corresponding credible intervals are obtained using MCMC techniques under both informative and non-informative prior distributions. A comprehensive Monte Carlo simulation study is conducted to assess and compare the performance of the proposed estimators. The simulation results consistently indicate that the MPS estimator outperforms the MLE in terms of accuracy and stability, a conclusion that holds under both classical and Bayesian frameworks. All simulations are implemented using the \textbf{R} programming environment, employing the \textbf{``nleqslv''} package for classical inference and the \textbf{``mcmc''} and \textbf{``coda''} packages for Bayesian convergence diagnostics. To illustrate the practical applicability of the proposed methodology, a real software stress–strength reliability data set is analyzed. In addition, optimal PT2 censoring schemes are identified under three different optimality criteria, providing valuable guidance for experimental design. Finally, several promising directions for future research are identified, including extensions to more advanced CSs, accelerated life testing environments, and alternative sampling strategies such as ranked set sampling.

\appendix

\section*{Appendix 1}\label{appendix-1}
\subsection*{Proof of the Lemma \ref{lemma1}}
\begin{proof}
      The conditional expectations are obtained by using the result that, conditional on $x_i$, the random variable $z_{x_i}$ follows a left-truncated distribution $F$ with truncation point at $x_i$. Consequently, the conditional distribution of $z_{x_i}$ is derived as

    $$f(z_{x_i}\mid x_i;\eta,\varpi)=\frac{f(z_{x_i};\eta,\varpi)}{1-F(x_i;\eta,\varpi)};z_{x_i}>x_i,i=1,2,\dots,n,$$
    where $F$ and $f$ denote CDF and PDF of the $UGR(\eta,\varpi)$ distribution respectively.
    
    Now, let \(g(z_{x_i})=\left(\ln z_{x_i}\right)^2\), then
\begin{equation*}
    \begin{aligned}
        E[g(z_{x_i}) \mid z_{x_i} > x_i] = & \frac{1}{1 - F(x_i)} \int_{x_i}^{1} g(z) f(z;\eta,\varpi)dz\\
        = & \frac{2\eta\varpi^2}{1 - F(x_i)} \int_{x_i}^{1} \left(\ln z\right)^2 \left(-\frac{\ln z}{z}\right)e^{-(\varpi\ln z)^2}\left(1-e^{-(\varpi\ln z)^2}\right)^{\eta-1}dz.
    \end{aligned}
\end{equation*}
Using the transformation $u = 1 - e^{-(\varpi \ln z)^2}$, we have the stated results (i)
\begin{equation*}
    E\left[\left(\ln z_{x_i}\right)^2 \mid z_{x_i} > x_{i}\right] = -\frac{\eta}{\varpi^2(1-F(x_{i}))}\int_0^{1-e^{-(\varpi\ln x_{i})^2}} \ln(1-u)u^{\eta-1}du.
\end{equation*}
In a similar way, let \(h(z_{x_i})=\ln\left(1-e^{-(\varpi\ln z_{x_i})^2}\right)\), then
\begin{equation*}
    \begin{aligned}
        E[h(z_{x_i}) \mid z_{x_i} > x_i] = & \frac{1}{1 - F(x_i)} \int_{x_i}^{1} h(z) f(z;\eta,\varpi)dz\\
        = & \frac{2\eta\varpi^2}{1 - F(x_i)} \int_{x_i}^{1} \ln\left(1-e^{-(\varpi\ln z)^2}\right) \left(-\frac{\ln z}{z}\right)e^{-(\varpi\ln z)^2}\left(1-e^{-(\varpi\ln z)^2}\right)^{\eta-1}dz.
    \end{aligned}
\end{equation*}
Now, using the transformation $u = 1 - e^{-(\varpi \ln z)^2}$, and simplifying the integral, we have the desired result (ii)
\begin{equation*}
    E\left[\ln\left(1-e^{-(\varpi\ln z_{x_i})^2}\right) \mid z_{x_i} > x_{i}\right] = \ln\left(1-e^{-(\varpi\ln x_{i})^2}\right)-\frac{1}{\eta}.
\end{equation*}
\end{proof}

\subsection*{Proof of the Lemma \ref{lemma2}}
\begin{proof}
     Consider a PT2 censored sample $X_1, X_2,\dots, X_n$ from $UGR(\eta,\varpi)$ with CS $(N,n,R_1,R_2,\dots,R_n)$. Then the probability density function of $X_{(i)}$ for $i=1,2,\dots,n$ is 
    \begin{equation}
        \begin{aligned}
            f_{X_{(j)}}(x) & = C_{i-1}\sum_{l=1}^ia_{i,l}f_X(x)[1-F_X(x)]^{\gamma_l-1}\\
            & = C_{i-1}\sum_{l=1}^ia_{i,l}2\eta\varpi^2\left(-\frac{\ln x}{x}\right)e^{-(\varpi\ln x)^2}\left(1-e^{-(\varpi\ln x)^2}\right)^{\eta\gamma_l-1},
        \end{aligned}
    \end{equation}
    where $$\gamma_l=v-l+1+\sum_{l=1}^vR_l, ~C_{i-1}=\prod_{l=1}^i\gamma_l, ~a_{i,l}=\prod_{q=1,q\neq l}^i\frac{1}{\gamma_q-\gamma_l}, ~\text{with}~ a_{1,1}=1.$$

    \begin{enumerate}[(i)]
        \item \begin{align*}
            E\left[ \left(\ln x\right)^2 \right]  = & \int_0^1\left(\ln x\right)^2C_{i-1}\sum_{l=1}^ia_{i,l}2\eta\varpi^2\left(-\frac{\ln x}{x}\right)e^{-(\varpi\ln x)^2}\left(1-e^{-(\varpi\ln x)^2}\right)^{\eta\gamma_l-1}dx\\
            &\text{$\bigg[$Now, using the transformation $u = 1 - e^{-(\varpi \ln x)^2}$$\bigg]$}\\
            = & -\frac{\eta C_{i-1}}{\varpi^2}\sum_{l=1}^ia_{i,l}\int_0^1u^{\eta\gamma_l-1}\ln(1-u)du\\
            = & -\frac{\eta C_{i-1}}{\varpi^2}\sum_{l=1}^ia_{i,l}B(\eta\gamma_l,1)[\psi(1)-\psi(1+\eta\gamma_l)].
        \end{align*}

        \item \begin{align*}
            E\left[\frac{e^{-(\varpi\ln x)^2}\left(\ln x_i\right)^2}{1-e^{-(\varpi\ln x)^2}}\right]  = &\int_0^1 \frac{e^{-(\varpi\ln x)^2}\left(\ln x\right)^2}{1-e^{-(\varpi\ln x)^2}}C_{i-1}\sum_{l=1}^ia_{i,l}2\eta\varpi^2\left(-\frac{\ln x}{x}\right)e^{-(\varpi\ln x)^2}\\
            & \times \left(1-e^{-(\varpi\ln x)^2}\right)^{\eta\gamma_l-1}dx\\
            = & -\frac{\eta C_{i-1}}{\varpi^2}\sum_{l=1}^ia_{i,l}\int_0^1u^{\eta\gamma_l-2}(1-u)\ln(1-u)du\\
            = & \frac{\eta C_{i-1}}{\varpi^2}\sum_{l=1}^ia_{i,l}B(\eta\gamma_l-1,2)[\psi(2)-\psi(1+\eta\gamma_l)].
        \end{align*}

        \item Similarly, we have,
        \begin{align*}
            E&\left[\frac{e^{-(\varpi\ln x)^2}\left(\ln x\right)^2\left(1-e^{-(\varpi\ln x)^2}-2\varpi^2\left(\ln x\right)^2\right)}{\left(1-e^{-(\varpi\ln x)^2}\right)^2}\right]\\
            =&\int_0^1 \frac{e^{-(\varpi\ln x)^2}\left(\ln x\right)^2\left(1-e^{-(\varpi\ln )^2}-2\varpi^2\left(\ln x\right)^2\right)}{\left(1-e^{-(\varpi\ln x)^2}\right)^2} C_{i-1}\sum_{l=1}^ia_{i,l}2\eta\varpi^2\left(-\frac{\ln x}{x}\right)e^{-(\varpi\ln x)^2}\\
            & \times \left(1-e^{-(\varpi\ln x)^2}\right)^{\eta\gamma_l-1}dx\\
            = & -\frac{\eta C_{i-1}}{\varpi^2}\sum_{l=1}^ia_{i,l}\int_0^1 u^{\eta\gamma_l-3}(1-u)\ln(1-u)(u+2\ln(1-u))du\\
            = & -\frac{\eta C_{i-1}}{\varpi^2}\sum_{l=1}^ia_{i,l}\bigg[B(\eta\gamma_l-1,2)[\psi(2)-\psi(\eta\gamma_l+1)] \\
            & + B(\eta\gamma_l-2,2)[(\psi(2)-\psi(\eta\gamma_l))^2 + \psi^{'}(2)-\psi^{'}(\eta\gamma_l)]\bigg].
        \end{align*}
    \end{enumerate}
\end{proof}

\subsection*{Proof of the Theorem \ref{theorem1}}
\begin{proof}
    Let, $\hat{\Theta}^{EM} = (\hat{\eta}_1, \hat{\eta}_2,\varpi)$ is the MLE of the parameter vector $\Theta = (\eta_1, \eta_2)$, then under standard regularity conditions, it holds that
\[
\left(\hat{\Theta}^{EM} - \Theta\right) \xrightarrow{D} N\left(0, \mathcal{I}^{-1}(\hat\Theta^{EM})\right),
\]
where $\mathcal{I}^{-1}(\hat\Theta^{EM})$ is the inverse of the observed Fisher information matrix.

Since $\Psi_{s,v}$ is a smooth and continuously differentiable function of $\Theta$, the Delta method implies that
\[
\left(\hat{\Psi}_{s,v}^{EM} - \Psi_{s,v}\right)
\xrightarrow{D} N\left(0, \mathcal{V}_{\hat \Psi_{s,v}^{EM}}\right),
\]
where $\mathcal{V}_{\hat \Psi_{s,v}^{EM}} = \nabla\Psi_{s,v}^{\top}\mathcal{I}^{-1}(\hat\Theta^{EM})\nabla\Psi_{s,v}$ and 
$\nabla\Psi_{s,v} = \left(\frac{\partial\Psi_{s,v}}{\partial\eta_1},\frac{\partial\Psi_{s,v}}{\partial\eta_2},\frac{\partial\Psi_{s,v}}{\partial\varpi} \right)^{\top}.$

Now, expanding this quadratic form gives
\[\mathcal{V}_{\hat \Psi_{s,v}^{EM}}=\left(\frac{\partial\Psi_{s,v}}{\partial\eta_1}\right)^2\mathcal{I}_{11}^{-1}(\hat\Theta^{EM}) + \left(\frac{\partial\Psi_{s,v}}{\partial\eta_2}\right)^2\mathcal{I}_{22}^{-1}(\hat\Theta^{EM}) + 2\left(\frac{\partial\Psi_{s,v}}{\partial\eta_1}\right)\left(\frac{\partial\Psi_{s,v}}{\partial\eta_2}\right)\mathcal{I}_{12}^{-1}(\hat\Theta^{EM}).\]
And this completes the proof.
\end{proof}

\section*{Appendix 2}\label{appendix-2}
\subsection*{The conditional expectations for the Sect. \ref{mle-em}}
\begin{equation}
    E\left[\left(\ln z_{ijl}\right)^2 \mid z_{ijl} > x_{ij}\right] = -\frac{\eta_1}{\varpi^2(1-F(x_{ij}))}\int_0^{1-e^{-(\varpi\ln x_{ij})^2}} \ln(1-u)u^{\eta_1-1}du = A_1(x_{ij};\eta_1,\varpi),
\end{equation}
\begin{equation}
    E\left[\left(\ln w_{iu}\right)^2 \mid w_{iu} > y_{i}\right] = -\frac{\eta_2}{\varpi^2(1-F(y_{i}))}\int_0^{1-e^{-(\varpi\ln y_{i})^2}} \ln(1-u)u^{\eta_2-1}du = A_2(y_i;\eta_2,\varpi),
\end{equation}
\begin{equation}
    E\left[\ln\left(1-e^{-(\varpi\ln z_{ijl})^2}\right) \mid z_{ijl} > x_{ij}\right] = \ln\left(1-e^{-(\varpi\ln x_{ij})^2}\right)-\frac{1}{\eta_1} = B_1(x_{ij};\eta_1,\varpi),
\end{equation}
and
\begin{equation}
    E\left[\ln\left(1-e^{-(\varpi\ln w_{iu})^2}\right) \mid w_{iu} > y_{i}\right] = \ln\left(1-e^{-(\varpi\ln y_{i})^2}\right)-\frac{1}{\eta_2} = B_2(y_i;\eta_2,\varpi).
\end{equation}

\subsection*{Matrix elements for the Sect \ref{OFIM}}
\begin{equation*}
    \begin{aligned}
        a_{11}= & \frac{Vn}{\eta_1^2},\quad a_{22}=\frac{N}{\eta_2^2},\quad a_{12}=0,\\
        a_{13}= & \frac{2Vn\eta_1}{\varpi}\int_0^1u^{\eta_1-2}(1-u)\ln(1-u)du,\quad a_{23}=\frac{2N\eta_2}{\varpi}\int_0^1u^{\eta_2-2}(1-u)\ln(1-u)du,\\
        a_{33}= & \frac{2(Vn+N)}{\varpi^2} -\frac{2Vn\eta_1}{\varpi^2}\int_0^1u^{\eta_1-1}\ln(1-u)du - \frac{2N\eta_2}{\varpi^2}\int_0^1u^{\eta_2-1}\ln(1-u)du \\ 
            & + \frac{2Vn\eta_1(\eta_1-1)}{\varpi^2}\int_0^1u^{\eta_1-3}(1-u)\ln(1-u)(u+2\ln(1-u))du \\
            & + \frac{2N\eta_2(\eta_2-1)}{\varpi^2}\int_0^1u^{\eta_2-3}(1-u)\ln(1-u)(u+2\ln(1-u))du.
    \end{aligned}
\end{equation*}

\begin{equation*}
    \begin{aligned}
        b_{11}= & \sum_{i=1}^n\sum_{j=1}^v\frac{R_j}{\eta_1^2}, \quad b_{22}= \sum_{i=1}^n\frac{S_i}{\eta_2^2}, \quad b_{12}=0,\\
        b_{13}= & \sum_{i=1}^n\sum_{j=1}^vR_j \left[2\varpi\frac{e^{-(\varpi\ln x_{ij})^2}\left(\ln x_{ij}\right)^2}{1-e^{-(\varpi\ln x_{ij})^2}} \right.\\
        & \left. + \frac{2\eta_1}{\varpi\left(1-e^{-(\varpi\ln x_{ij})^2}\right)^{\eta_1}}\int_0^{1-e^{-(\varpi\ln x_{ij})^2}}u^{\eta_1-2}(1-u)\ln(1-u)du \right],\\
        b_{23} = & \sum_{i=1}^nS_i \left[2\varpi\frac{e^{-(\varpi\ln y_{i})^2}\left(\ln y_{i}\right)^2}{1-e^{-(\varpi\ln y_{i})^2}} + \frac{2\eta_2}{\varpi\left(1-e^{-(\varpi\ln y_{i})^2}\right)^{\eta_2}}\int_0^{1-e^{-(\varpi\ln y_{i})^2}}u^{\eta_2-2}(1-u)\ln(1-u)du \right],\\
        b_{33} = & \sum_{i=1}^n\sum_{j=1}^vR_j\frac{2}{\varpi^2} + \sum_{i=1}^nS_i\frac{2}{\varpi^2} + \sum_{i=1}^n\sum_{j=1}^v2\eta_1R_v \frac{e^{-(\varpi\ln x_{ij})^2}\left(\ln x_{ij}\right)^2\left(1-e^{-(\varpi\ln x_{ij})^2}-2\varpi^2\left(\ln x_{ij}\right)^2\right)}{\left(1-e^{-(\varpi\ln x_{ij})^2}\right)^2} \\
        & + \sum_{i=1}^n2\eta_2S_i \frac{e^{-(\varpi\ln y_{i})^2}\left(\ln y_{i}\right)^2\left(1-e^{-(\varpi\ln y_{i})^2}-2\varpi^2\left(\ln y_{i}\right)^2\right)}{\left(1-e^{-(\varpi\ln y_i)^2}\right)^2} \\
         & - \sum_{i=1}^n\sum_{j=1}^v\frac{2\eta_1}{\varpi^2\left(1-e^{-(\varpi\ln x_{ij})^2}\right)^{\eta_1}}\int_0^{1-e^{-(\varpi\ln x_{ij})^2}}u^{\eta_1-1}\ln(1-u)du \\
         & - \sum_{i=1}^n\frac{2\eta_2}{\varpi^2\left(1-e^{-(\varpi\ln y_{i})^2}\right)^{\eta_2}}\int_0^{1-e^{-(\varpi\ln y_{i})^2}}u^{\eta_2-1}\ln(1-u)du\\
         & + \sum_{i=1}^n\sum_{j=1}^v\frac{2\eta_1(\eta_1-1)}{\varpi^2\left(1-e^{-(\varpi\ln x_{ij})^2}\right)^{\eta_1}}\int_0^{1-e^{-(\varpi\ln x_{ij})^2}}u^{\eta_1-3}(1-u)\ln(1-u)(u+2\ln(1-u))du \\
         & + \sum_{i=1}^n\frac{2\eta_2(\eta_2-1)}{\varpi^2\left(1-e^{-(\varpi\ln y_{i})^2}\right)^{\eta_2}}\int_0^{1-e^{-(\varpi\ln y_{i})^2}}u^{\eta_2-3}(1-u)\ln(1-u)(u+2\ln(1-u))du. 
    \end{aligned}
\end{equation*}

\subsection*{Matrix elements for the Sect \ref{EFIM}}
\begin{equation*}
    \begin{aligned}
        e_{11}= & \frac{vn}{\eta_1^2},\quad e_{22}=\frac{n}{\eta_2^2},\quad e_{12}=0,\\
        e_{13}= & \frac{2\eta_1}{\varpi^2}\sum_{i=1}^n\sum_{j=1}^vC_{j-1}\sum_{l=1}^ja_{j,l}B(\eta_1\gamma_l,1)[\psi(1)-\psi(1+\eta_1\gamma_l)],\\
        e_{23}= & \frac{2\eta_2}{\varpi^2}\sum_{i=1}^nC_{i-1}\sum_{l=1}^ia_{i,l}B(\eta_2\gamma_l,1)[\psi(1)-\psi(1+\eta_2\gamma_l)], \\
        e_{33}= & \frac{2m(v+1)}{\varpi^2} -\frac{2\eta_1}{\varpi^2} \bigg[\sum_{i=1}^n\sum_{j=1}^v \left(\eta_1(1+R_j)-1 \right) C_{j-1}\sum_{l=1}^ja_{j,l} \bigg\{B(\eta_1\gamma_l-1,2)[\psi(2) \\
        & -\psi(\eta_1\gamma_l+1)] + B(\eta_1\gamma_l-2,2)[(\psi(2)-\psi(\eta_1\gamma_l))^2 + \psi^{'}(2)-\psi^{'}(\eta_1\gamma_l)]\bigg\}\bigg] \\
        & -\frac{2\eta_2}{\varpi^2} \bigg[\sum_{i=1}^n \left(\eta_2(1+S_i)-1 \right) C_{i-1}\sum_{l=1}^ia_{i,l} \bigg\{B(\eta_2\gamma_l-1,2)[\psi(2)-\psi(\eta_2\gamma_l+1)] \\
        & + B(\eta_2\gamma_l-2,2)[(\psi(2)-\psi(\eta_2\gamma_l))^2 + \psi^{'}(2)-\psi^{'}(\eta_2\gamma_l)]\bigg\}\bigg].
    \end{aligned}
\end{equation*}

\section*{Funding} The research work of Tanmay Kayal is partially supported by the Indian Institute of Technology Mandi, India, with the SEED project grant (Project No. IITM/SG/TK/141).

\bibliographystyle{plain}

\end{document}